\theoremstyle{plain}
    \newtheorem{theorem}{Theorem}
    \newtheorem{lemma}{Lemma}
    \newtheorem{claim}{Claim}
\theoremstyle{definition}
    \newtheorem{definition}{Definition}
    \newtheorem{remark}{Remark}
\newtheoremstyle{named}{}{}{}{}{\bfseries}{.}{.5em}{#3}
\theoremstyle{named}
    \newtheorem*{namedthm}{}
\crefname{definition}{definition}{definitions}
\crefname{theorem}{theorem}{theorems}
\crefname{corollary}{corollary}{corollaries}
\crefname{lemma}{lemma}{lemmas}
\crefname{prop}{proposition}{propositions}
\crefname{claim}{claim}{claims}
\crefname{remark}{remark}{remarks}
\newcommand{\R}{\mathbb{R}}
\newcommand{\Z}{\mathbb{Z}}
\newcommand{\bs}[1]{\boldsymbol{#1}}
\newcommand{\1}{\bs{1}}
\newcommand{\Par}[1]{\left( #1 \right)}
\newcommand{\Abs}[1]{\left| #1 \right|}
\newcommand{\CBra}[1]{\left\{ #1 \right\}}
\newcommand{\graph}{G}
\newcommand{\vertexSet}{V}
\newcommand{\edgeSet}{E}
\newcommand{\graphVE}{\graph=(\vertexSet,\edgeSet)}
\newcommand{\shortEdge}[2]{#1#2}
\newcommand{\symdif}{\triangle}
\newcommand{\matching}{M}
\newcommand{\capacity}{c}
\newcommand{\weight}{w}
\newcommand{\graphwc}{(\graph,\weight,\capacity)}
\newcommand{\gwcm}{[\graphwc,\matching]}
\newcommand{\gwcmaux}{[(\graph',\weight',\1),\matching']}
\newcommand{\tb}{\text{tb}} 
\newcommand{\nuG}{\nu(\graph)}
\newcommand{\nufG}{\nu_f(\graph)}
\newcommand{\nufcG}{\nu_f^\capacity(\graph)}
\newcommand{\nucG}{\nu^\capacity(\graph)}
\newcommand{\taufcG}{\tau_f^\capacity(\graph)}
\newcommand{\Po}{\text{P}\xspace}
\newcommand{\NP}{\text{NP}\xspace}
\def\keywordname{{\bf Keywords:}}
\providecommand{\keywords}[1]{\def\and{{\textperiodcentered} }
\par\addvspace\baselineskip
\noindent\keywordname\enspace\ignorespaces#1}
\begin{document}

\title{Stabilization of Capacitated Matching Games}
\author[1]{Matthew Gerstbrein}
\author[2]{Laura Sanità}
\author[3]{Lucy Verberk\(^{(\text{\Letter})}\)}
\affil[1]{
    University of Waterloo, Canada. \protect \\
    {\normalsize\tt{mlgerstbrein@uwaterloo.ca}}
}
\affil[2]{
    Bocconi University of Milan, Italy. \protect \\
    {\normalsize\tt{laura.sanita@unibocconi.it}}
}
\affil[3]{
    Eindhoven University of Technology, Netherlands. \protect \\
    {\normalsize\tt{l.p.a.verberk@tue.nl}}
}
\date{}
\maketitle

\begin{abstract}
    An edge-weighted, vertex-capacitated graph \(\graph\) is called \emph{stable} if the value of a maximum-weight capacity-matching equals the value of a maximum-weight \emph{fractional} capacity-matching. Stable graphs play a key role in characterizing the existence of stable solutions for popular combinatorial games that involve the structure of matchings in graphs, such as network bargaining games and cooperative matching games. 
    
    The vertex-stabilizer problem asks to compute a minimum number of players to block (i.e., vertices of \(\graph\) to remove) in order to ensure stability for such games. The problem has been shown to be solvable in polynomial-time, for unit-capacity graphs. This stays true also if we impose the restriction that the set of players to block must not intersect with a given specified maximum matching of \(\graph\).
    
    In this work, we investigate these algorithmic problems in the more general setting of arbitrary capacities. We show that the vertex-stabilizer problem with the additional restriction of avoiding a given maximum matching remains polynomial-time solvable. Differently, without this restriction, the vertex-stabilizer problem becomes NP-hard and even hard to approximate, in contrast to the unit-capacity case. 
    
    Finally, in unit-capacity graphs there is an equivalence between the stability of a graph, existence of a stable solution for network bargaining games, and existence of a stable solution for cooperative matching games. We show that this equivalence does not extend to the capacitated case.
    
    \keywords{Matching  \and Game theory \and Network bargaining.}
\end{abstract}

\section{Introduction}

\emph{Network Bargaining Games} (NBG) and \emph{Cooperative Matching Games}  (CMG) are popular combinatorial games involving the structure of matchings in graphs. CMG were introduced in the seminal paper of Shapley and Shubik 50 years ago~\cite{Shapley1971Assignment}, and have been widely studied since then. NBG are relatively more recent, and were defined by Kleinberg and Tardos~\cite{Kleinberg2008Balanced} as a generalization of Nash's 2-player bargaining solution~\cite{Nash1950}.

Instances of these games are described by a graph \(\graphVE\) with weights \(w\in\R_{\geq0}^\edgeSet\), where the vertices and the edges model the players and their potential interactions, respectively. The value of a \emph{maximum-weight matching}, denoted as \(\nuG\), is the total value that players can collectively accumulate. The goal, roughly speaking, is to assign values to players in such a way that players have no incentive to deviate from the current allocation. 

Formally, in an instance of a NBG, players want to enter in a deal with one of their neighbours, and agree on how to split the value of the deal given by the weight of the corresponding edge. Hence, an outcome is naturally associated with a matching \(\matching\) of \(\graph\) representing the deals, and allocation vector \(y\in\R_{\geq0}^\vertexSet\) with \(y_{u}+y_{v}=\weight_{uv}\) if \(\shortEdge{u}{v}\in\matching\), and \(y_v=0\) if \(v\) is not matched. An outcome \((\matching,y)\) is \emph{stable} if each player's allocation \(y_u\) is at least as large as their \emph{outside option}, formally defined as \(\max_{v: uv \in \edgeSet\setminus\matching} \CBra{\weight_{uv} - y_v}\).

In an instance of a CMG, one wants to find an allocation of total value \(\nuG\), given by a vector \(y\in\R_{\geq0}^\vertexSet\) in which no subset of players can gain more by forming a coalition. This condition is enforced by the constraint \(\sum_{v\in S} y_v \geq \nu(\graph[S])\) for all \(S\subseteq \vertexSet\), where \(\graph[S]\) indicates the subgraph of \(\graph\) induced by the vertices in \(S\). Such an allocation is called \emph{stable}, and the set of stable allocations constitutes the \emph{core} of the game.

Despite having been defined in different contexts, there is a tight link between stable solutions of these types of games. In particular, if each game is played on the same graph \(\graph\), then it has been shown that either a stable solution exists for both games, or for neither game. This follows as both games admit the same polyhedral characterization of instances with stable solutions~\cite{Deng1999Algorithmic,Kleinberg2008Balanced}. Specifically, a stable solution exists if and only if \(\nuG\) equals the value of the standard linear programming (LP) relaxation of the maximum matching problem defined as
\begin{equation}
    \nufG := \max\CBra{ \weight^\top x : \sum_{u: uv \in E} x_{uv}  \leq1 \; \forall v\in\vertexSet, \; x\geq0 }.
\end{equation}

A graph \(\graph\) for which \(\nuG=\nufG\) is called \emph{stable}. As a result of this characterization, it is easy to see that there are graphs which do not admit stable solutions (to either type of game), such as odd cycles. Given that not all graphs are stable, naturally arises the \emph{stabilization} problem of how to minimally modify a graph to turn it into a stable one. Stabilization problems attracted a lot of attention in the literature in the past years (see e.g.~\cite{Biro10,Konemann12,Bock2015Finding,CHANDRASEKARAN201956,Ahmadian2018Stabilizing,Chandrasekaran2017,Ito2017Efficient,Koh2020Stabilizing}). 

In this context, very natural operations to stabilize graphs are edge- and vertex-removal operations. Those have an interesting interpretation: they correspond to blocking interactions and players, respectively, in order to ensure a stable outcome. While removing a minimum number of edges to stabilize a graph is NP-hard already for unit weight graphs~\cite{Bock2015Finding}, and even hard-to-approximate with a constant factor~\cite{G18,Koh2020Stabilizing}, stabilizing the graph via vertex-removal operations turned out to be solvable in polynomial-time. Specifically, \cite{Ahmadian2018Stabilizing} and \cite{Ito2017Efficient} showed that computing a minimum-cardinality set of players to block in order to stabilize an unweighted graph (called the \emph{vertex-stabilizer problem}) can be done in polynomial time. Furthermore, \cite{Ahmadian2018Stabilizing} showed that computing a minimum set of players to block in order to make a given maximum matching realizable as a stable outcome (called the \emph{\(\matching\)-vertex-stabilizer problem}) is also efficiently solvable. The authors of \cite{Koh2020Stabilizing} showed that both results generalize to weighted graphs.

This paper focuses on \emph{Capacitated NBG}, introduced by Bateni et al~\cite{Bateni2010Cooperative} as a generalization of NBG, to capture the more realistic scenario where players are allowed to enter in more than one deal. This generalization can be modeled by allowing for vertex capacities \(\capacity\in\Z_{\geq0}^\vertexSet\). The notion of a matching is therefore generalized to a \emph{\(\capacity\)-matching}, where each vertex \(v\in\vertexSet\) is matched with at most \(\capacity_v\) vertices. In this case, the value of a maximum-weight \(\capacity\)-matching of a graph \(\graph\) is denoted as \(\nucG\), and the standard LP relaxation is given by
\begin{equation}\label{eq:fracmatching}
    \nufcG := \max \CBra{ \weight^\top x : \sum_{u: uv \in E} x_{uv}\leq\capacity_v \; \forall v\in\vertexSet, \; 0\leq x\leq1 }.
\end{equation}

Similarly to the unit-capacity case, an outcome to the NBG is associated with a \(\capacity\)-matching \(\matching\) and a vector \(a\in\R_{\geq0}^{2\edgeSet}\) that satisfies \(a_{uv}+a_{vu}=\weight_{uv}\) if \(\shortEdge{u}{v}\in\matching\), and \(a_{uv}=a_{vu}=0\) otherwise. The concepts of outside option and stable outcome can be defined similarly as in the unit-capacity case, see \cite{Bateni2010Cooperative}. 

The authors of \cite{Bateni2010Cooperative} proved that the LP characterization of stable solutions generalize, i.e., there exist a stable outcome for the capacitated NBG on \(\graph\) if and only if \(\nucG=\nufcG\) (i.e., \(\graph\) is \emph{stable}). Farczadi et al~\cite{Farczadi2013Network} show that some other important properties of NBG extend to this capacitated generalization, such as the possibility to efficiently compute a so-called \emph{balanced} solution (we refer to \cite{Farczadi2013Network} for details). 

The goal of this paper is to investigate whether the other two significant features of NBG mentioned before generalize to the capacitated setting. Namely:
\emph{
\begin{itemize}
    \item[(i)] Can one still efficiently stabilize instances via vertex-removal operations?
    \item[(ii)] Does the equivalence between existence of stable allocations for capacitated CMG and existence of stable solutions for capacitated NBG still hold?
\end{itemize}
}

\paragraph{Our Results.}
    In this paper we provide an answer to the above questions. 
    
    We investigate the \(\matching\)-vertex-stabilizer problem and the vertex-stabilizer problem in the capacitated setting
    in \cref{sec: M-vertex-stab,sec: vertex-stab}, respectively. While for unit-capacity graphs both problems are efficiently solvable, we show that adding capacities makes the complexity status of the vertex-stabilizer problem diverge. In particular, we prove that the vertex-stabilizer problem is \NP-complete, and no \(n^{1-\varepsilon}\)-approximation is possible, for any \(\varepsilon >0\), unless P=NP. Note that a trivial \(n\)-approximation algorithm can be easily developed.
    
    In contrast, we show that the \(\matching\)-vertex-stabilizer problem is still polynomial-time solvable in the capacitated setting. Our results here extend those of \cite{Koh2020Stabilizing} for unit-capacity graphs, and builds upon an auxiliary construction of \cite{Farczadi2013Network}. 

    Finally, in \cref{sec: CMG} we show that the equivalence between stability of a graph, existence of a stable allocation for CMG and existence of a stable outcome for NBG does \emph{not} extend in the capacitated setting. In particular, we provide an unstable graph which does attain a stable allocation for the capacitated CMG\footnote{It is stated in \cite{Farczadi2015Matchings} (theorem 2.3.9) that a stable allocation for capacitated CMG exists iff \(\graph\) is stable, but our example shows this statement is not correct.}.
\section{Preliminaries and Notation}\label{sec: preliminaries}

\paragraph{Problem definition.} 
    A set \(S \subseteq V\) is called a {\bf vertex-stabilizer} if \(\graph\setminus S\) is stable, where \(\graph\setminus S\) is the subgraph induced by the vertices \(\vertexSet\setminus S\). We say that a vertex-stabilizer \(S\) \emph{preserves} a matching \(\matching\) of \(\graph\) if \(\matching\) is a matching in \(\graph\setminus S\). 

    We now formally define the stabilization problems considered in this paper.

    \smallskip
    \noindent
    \textbf{Vertex-stabilizer problem:} given \(\graphVE\) with edge weights \(\weight\in\R_{\geq0}^\edgeSet\) and vertex capacities \(\capacity \in \Z^\vertexSet_{\geq 0}\), find a vertex-stabilizer of minimum cardinality. 
    
    \smallskip
    \noindent
    \textbf{\(\matching\)-vertex-stabilizer problem:} given  \(\graphVE\) with edge weights \(\weight\in\R_{\geq0}^\edgeSet\), vertex capacities \(\capacity \in \Z^\vertexSet_{\geq 0}\), and a maximum-weight \(\capacity\)-matching \(\matching\), find a vertex-stabilizer of minimum cardinality among the ones preserving \(\matching\).
    
    \smallskip
    An instance of the vertex-stabilizer problem is stable if \(\graph\) is stable. 
    An instance of the \(\matching\)-vertex-stabilizer problem is stable if \(\graph\) is stable, and \(\matching\) is a maximum-weight \(\capacity\)-matching in \(\graph\). Without loss of generality, we can assume that \(\capacity_v\) is bounded by the degree of \(v\in\vertexSet\).
    
\paragraph{Notation.}
    We refer to a graph with edge weights and vertex capacities as \(\graphwc\), and to a graph with a \(\capacity\)-matching \(\matching\) as \(\gwcm\).
    For a vertex \(v\), we let \(\delta(v)\) be the set of edges of \(\graph\) incident into it, we let \(N(v)\) be the set of its adjacent neighbours, and \(N^+(v) = N(v) \cup \{v\}\). For \(F\subseteq\edgeSet\), we denote by \(d_v^F\) the degree of \(v\) in \(\graph\) with respect to the edges in \(F\). We define \(\weight(F):=\sum_{e\in F} \weight_e\). 
    Given a \(\capacity\)-matching \(\matching\), we say that \(v\in\vertexSet\) is \emph{exposed} if \(d_v^\matching=0\), \emph{covered} if \(d_v^\matching>0\), \emph{unsaturated} if \(d_v^\matching<c_v\) and \emph{saturated} if \(d_v^\matching=c_v\). We also use these terms for feasible solutions \(x\) of \eqref{eq:fracmatching}, called \emph{fractional \(\capacity\)-matchings}, e.g., \(v\) is exposed if \(\sum_{e \in \delta(v)} x_e=0\). We let \(n:=\Abs{\vertexSet}\), and \(\symdif\) denote the symmetric difference operator.

    We denote a (\(uv\)-)walk \(W\) by listing its edges and endpoints sequentially, i.e., by \(W=(u; e_1,\ldots,e_k;v)\). We define its inverse as \(W^{-1}=(v;e_k,\ldots,e_1;u)\). We say a walk is closed if \(u=v\). A trail is a walk in which edges do not repeat. A path is a trail in which internal vertices do not repeat. A cycle is a path which starts and ends at the same vertex. If we refer to the edge set of a walk \(W\), we just write \(W\). Note that this can be a multi-set.

\paragraph{Duality and augmenting structures.} 
    The dual of \eqref{eq:fracmatching} is given by
    \begin{equation} \label{eq:fractional_vertex_cover}
        \taufcG := \min \CBra{ \capacity^\top y + \1^\top z : y_u+y_v+z_{uv}\geq\weight_{uv} \forall\shortEdge{u}{v}\in\edgeSet, y\geq0, z\geq0 },
    \end{equation}
    where \(\1\) is the all-one vector of appropriate size. 
    A solution \((y,z)\) feasible for (\ref{eq:fractional_vertex_cover}) is called a \emph{fractional vertex cover}. By LP theory, we have \(\nucG\leq\nufcG=\taufcG\).
    
   \begin{definition}
    We say that a walk \(W\) is \(\matching\)-alternating (w.r.t. a matching \(\matching\)) if it alternates edges in \(\matching\) and edges not in \(\matching\). We say \(W\) is \(\matching\)-augmenting if it is \(\matching\)-alternating and \(\weight(W\setminus\matching)>\weight(W\cap\matching)\). An \(\matching\)-alternating \(u v\)-walk \(W\) is \emph{proper} if  \(d_u^{W\symdif\matching}\leq c_u\) and \(d_v^{W\symdif\matching}\leq c_v\). 
   \end{definition}
   
    \begin{definition}
        Given an \(\matching\)-alternating walk \(W=(u;e_1,\ldots,\allowbreak e_k;v)\) and an \(\varepsilon>0\), the \emph{\(\varepsilon\)-augmentation} of \(W\) is the vector \(x^{\matching/W}(\varepsilon)\in\R^\edgeSet\) given by
        \begin{equation}
            x_e^{\matching/W}(\varepsilon) = \begin{cases} 
                1-\kappa(e)\varepsilon & \text{ if } e\in\matching, \\
                \kappa(e)\varepsilon & \text{ if } e\notin\matching,
            \end{cases}
        \end{equation}
        where \(\kappa(e)=\Abs{\CBra{i\in[k] \mid e_i=e, e_i\in W}}\). We say that \(W\) is \emph{feasible} if there exists an \(\varepsilon>0\) such that the corresponding \(\varepsilon\)-augmentation of \(W\) is a fractional \(\capacity\)-matching.
    \end{definition}
    
    \begin{remark}
        \label{rmk: feasible distinct endpoints is proper}
        A feasible \(\matching\)-alternating walk with distinct endpoints is proper.
    \end{remark}
    
    \begin{definition}
        An odd cycle \(C=(v;e_1,\ldots,e_k;v)\) is called an \emph{\(\matching\)-blossom} if it is \(\matching\)-alternating such that either \(e_1\) and \(e_k\) are both in \(\matching\), or are both not in \(\matching\). 
        The vertex \(v\) is called the \emph{base} of the blossom. 
    \end{definition}
    \begin{definition}
        An \emph{\(\matching\)-flower} \(C\cup P\) consists of an \(\matching\)-blossom \(C\) with base \(u\) and an \(\matching\)-alternating path \(P=(u;e_1,\ldots,e_k;v)\) such that \((P,C,P^{-1})\) is \(\matching\)-alternating and feasible. The vertex \(v\) is called the \emph{root} of the flower. The flower is \emph{\(\matching\)-augmenting} if
        \begin{equation}
            w(C\setminus\matching) + 2w(P\setminus\matching) > w(C\cap\matching) + 2w(P\cap\matching).
        \end{equation}
    \end{definition}
    \begin{definition}
        An \emph{\(\matching\)-bi-cycle} \(C\cup P\cup D\) consists of two \(\matching\)-blossoms \(C\) and \(D\) with bases \(u\) and \(v\), respectively, and an \(\matching\)-alternating path \(P=(u;e_1,\ldots,e_k;v)\) such that \((P,D,P^{-1},C)\) is \(\matching\)-alternating. The bi-cycle is \emph{\(\matching\)-augmenting} if
        \begin{equation}
            w(C\setminus\matching) + 2w(P\setminus\matching) + w(D\setminus\matching) > w(C\cap\matching) + 2w(P\cap\matching) + w(D\cap\matching).
        \end{equation}
    \end{definition}
    
    Note that, in the last two definitions, it may happen that \(P\) has no edges. In the unit-capacity case it is well-known that a matching \(\matching\) is maximum-weight if and only if there do not exist any proper \(\matching\)-augmenting paths or cycles. This generalizes to the capacitated case. We report a proof for completeness. 
    \begin{theorem}
        \label{thm: M max iff no augmenting trail}
        A \(\capacity\)-matching \(\matching\) in \(\graphwc\) is maximum-weight if and only if \(\graph\) does not contain a proper \(\matching\)-augmenting trail.
    \end{theorem}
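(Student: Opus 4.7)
The plan is to prove both directions, with both arguments routing through the symmetric difference of $\matching$ with another $\capacity$-matching. For the ``only if'' direction, I would assume $W$ is a proper $\matching$-augmenting trail and set $\matching' := \matching \symdif W$. Since $W$ is a trail, $\matching'$ is a well-defined edge set. At each \emph{internal} visit of any vertex $v$ on $W$, the $\matching$-alternation forces the two edges of $W$ incident to $v$ at that visit to be of opposite type (one in $\matching$, one not), so the contributions cancel in the symmetric difference and $d_v^{\matching'} = d_v^{\matching} \leq \capacity_v$. At the two endpoints, the properness hypothesis is exactly $d_u^{\matching \symdif W} \leq \capacity_u$. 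Hence $\matching'$ is a $\capacity$-matching, and $\weight(\matching') - \weight(\matching) = \weight(W\setminus\matching) - \weight(W\cap\matching) > 0$ contradicts the maximality of $\matching$.

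For the ``if'' direction, assume $\matching$ is not maximum-weight and pick a maximum-weight $\capacity$-matching $\matching^*$. Set $H := \matching \symdif \matching^*$ and colour edges of $\matching\setminus\matching^*$ red and those of $\matching^*\setminus\matching$ blue. At every vertex $v$, I would greedily pair red edges with blue edges, yielding $\min\bigl(d_v^{\matching\setminus\matching^*},d_v^{\matching^*\setminus\matching}\bigr)$ red--blue pairs plus leftover unpaired edges, all of a single colour at $v$. Following these local pairings globally carves $H$ into edge-disjoint $\matching$-alternating trails $W_1,\ldots,W_t$ whose endpoint edges are precisely the unpaired ones. The weight identity
\begin{equation}
\sum_{i=1}^{t} \bigl( \weight(W_i\setminus\matching) - \weight(W_i\cap\matching) \bigr) = \weight(\matching^*)-\weight(\matching) > 0
\end{equation}
then forces at least one trail $W^*$ in this decomposition to be $\matching$-augmenting.

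The step I expect to be the main obstacle is verifying that $W^*$ is \emph{proper}. Let $u$ be an endpoint of $W^*$ and let $j \in \{1,2\}$ count the endpoint edges of $W^*$ at $u$ (with $j=2$ corresponding to a closed trail returning to $u$); by construction, these $j$ edges share a colour. If they are red, then $d_u^{\matching \symdif W^*} = d_u^{\matching} - j \leq \capacity_u$. If they are blue, the pairing rule forces $d_u^{\matching^*\setminus\matching} - d_u^{\matching\setminus\matching^*} \geq j$, whence $d_u^{\matching^*} \geq d_u^{\matching}+j$, and combining with $d_u^{\matching^*}\leq \capacity_u$ gives $d_u^{\matching}+j \leq \capacity_u$; a short degree calculation then shows $d_u^{\matching \symdif W^*} = d_u^{\matching}+j \leq \capacity_u$. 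With $W^*$ a proper $\matching$-augmenting trail, $\matching$ cannot have been maximum-weight, completing the converse.
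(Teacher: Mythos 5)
Your forward direction matches the paper's (symmetric difference with the trail), just with the degree bookkeeping at internal visits and endpoints spelled out; that part is fine. The backward direction, however, takes a genuinely different route. The paper splits each vertex \(v\) into \(\max\{d_v^{\matching\setminus N},d_v^{N\setminus\matching}\}\) copies to build a unit-capacity auxiliary graph \(\hat{\graph}\), invokes the classical unit-capacity fact that a non-maximum matching admits a proper augmenting path or cycle, and maps that structure back to \(\graph\), checking properness by a case analysis on whether the terminal copies of \(u\) are \(\hat{\matching}\)-covered. You instead stay in \(\graph\): the red--blue pairing at each vertex is a transition system that decomposes \(\matching\symdif\matching^*\) into edge-disjoint alternating trails, and the weight identity extracts an augmenting one by averaging. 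Your properness check (\(j\) unpaired terminal edges of one colour at \(u\) forces \(d_u^{\matching^*}\geq d_u^{\matching}+j\), hence \(d_u^{\matching\symdif W^*}\leq\capacity_u\)) is combinatorially the same inequality the paper derives from all copies of \(u\) in \(\hat{\graph}\) being \(\hat{N}\)-covered, so the two arguments are essentially equivalent in content; yours is more self-contained (no appeal to the unit-capacity theorem as a black box), while the paper's reuses existing machinery. One small point to patch: a trail in your decomposition can be a fully paired closed trail, where every visit --- including the one at the chosen base --- is a red--blue pair; there the two terminal edges have \emph{opposite} colours rather than sharing one, so your dichotomy does not literally apply. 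But in that case \(d_v^{W^*\symdif\matching}=d_v^{\matching}\) at every vertex and properness is immediate, so the gap is cosmetic.
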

    \begin{proof}
        \((\Rightarrow)\)
        If \(\graph\) contains a proper \(\matching\)-augmenting trail \(T\), then \(\matching\symdif T\) is a \(\capacity\)-matching and \(\weight(\matching\symdif T)>\weight(\matching)\), which means \(\matching\) is not maximum-weight.
        
        \((\Leftarrow)\)
        Let \(\matching\) be a \(\capacity\)-matching in \(\graph\) such that \(\matching\) is not maximum-weight. We will show that \(\graph\) contains a proper \(\matching\)-augmenting trail. Let \(N\) be a maximum-weight \(\capacity\)-matching, and consider the graph induced by \(\matching\symdif N\). We construct a unit-capacity graph \(\hat{\graph}\):
        \begin{enumerate}[1.]
            \item 
                For each \(v\in\vertexSet\), define \(b_v:=\max\CBra{d_v^{\matching\setminus N},d_v^{N\setminus\matching}}\), create copies \(v_1,\ldots,v_{b_v}\) and add them to \(\vertexSet(\hat{\graph})\). Initialize \(J_\matching(v)=J_N(v)=\CBra{1,\ldots,b_v}\).
            \item
                For each \(\shortEdge{u}{v}\in\matching\setminus N\), add a single edge \(\shortEdge{u_i}{v_j}\) to both \(\edgeSet(\hat{\graph})\) and \(\hat{\matching}\) with edge-weight \(w_{uv}\), where \(i\in J_\matching(u)\) and \(j\in J_\matching(v)\) are chosen arbitrarily. Remove \(i\) and \(j\) from \(J_\matching(u)\) and \(J_\matching(v)\), respectively.
            \item
                Likewise for each \(\shortEdge{u}{v}\in N\setminus\matching\).
        \end{enumerate}
        Observe that this construction establishes a natural weight-preserving bijection between \(\edgeSet\) and \(\edgeSet(\hat{\graph})\). Furthermore, the sets \(\hat{\matching}\) and \(\hat{N}\) are matchings in \(\hat{\graph}\), and \(\weight(N)>\weight(\matching)\) implies \(\weight(\hat{N})>\weight(\hat{\matching})\). In particular, \(\hat{\matching}\) is not maximum-weight in \(\hat{\graph}\). Since \(\hat{\graph}\) has unit-capacities, it contains at least one proper \(\hat{\matching}\)-augmenting path or cycle \(\hat{T}\). Let \(T=(u;e_1,\ldots,e_k;v)\) be the corresponding \(\matching\)-alternating walk in \(\graph\). Since \(\hat{T}\) does not repeat edges and is actually alternating between \(\hat{\matching}\) and \(\hat{N}\), \(T\) is alternating between \(\matching\setminus N\) and \(N\setminus\matching\), and also does not repeat edges, i.e., \(T\) is a trail. Since \(\weight(\hat{T}\setminus\hat{\matching})>\weight(\hat{T}\cap\hat{\matching})\), we also have \(\weight(T\setminus\matching)>\weight(T\cap\matching)\), i.e., \(T\) is an \(\matching\)-augmenting trail. Thus, we only need to show that \(T\) is proper, i.e., that \(d_u^{T\symdif\matching}\leq c_u\) and \(d_v^{T\symdif\matching}\leq c_v\).
        
        \emph{Case 1: \(\hat{T}\) is a proper \(\hat{\matching}\)-augmenting path.} 
        If \(u=v\), then provided that at least one of \(e_1\) and \(e_k\) is in \(\matching\), then \(d_u^{T\symdif\matching}\leq d_u^\matching\leq c_u\). If on the other hand neither of \(e_1\) and \(e_k\) is in \(\matching\), then the corresponding edges \(\hat{e}_1\) and \(\hat{e}_k\) in \(\hat{T}\) are not in \(\hat{\matching}\), and must therefore be in \(\hat{N}\). Let \(u_i\) and \(u_j\) be the first and last vertices of \(\hat{T}\), incident with \(\hat{e}_1\) and \(\hat{e}_k\), respectively. Note that \(u_i\) and \(u_j\) are distinct, since \(\hat{T}\) is a path. Furthermore, since \(\hat{T}\) is proper, \(u_i\) and \(u_j\) are not incident with edges from \(\hat{\matching}\). Observe that by construction of \(\hat{\graph}\), either all vertices in \(\CBra{u_1,\ldots,u_{b_u}}\) are \(\hat{\matching}\)-covered or \(\hat{N}\)-covered. Since \(u_i\) and \(u_j\) are not \(\hat{\matching}\)-covered, all copies of \(u\) must be \(\hat{N}\)-covered. Hence, \(d_u^{\matching\setminus N}\leq d_u^{N\setminus\matching}-2\), which means \(d_u^{\matching}\leq d_u^{N}-2\). Finally, \(d_u^{T\symdif\matching}=d_u^\matching+2\leq d_u^N\leq c_u\).
        
        If \(u\neq v\), we again consider whether or not \(e_1\) and \(e_k\) are in \(\matching\). Since \(u\) and \(v\) are distinct, these cases for \(e_1\) and \(e_k\) are independent. If \(e_1\in\matching\), then \(d_u^{T\symdif\matching}=d_u^\matching-1\leq c_u\). If \(e_1\notin\matching\), then \(d_u^{T\symdif\matching}=d_u^\matching+1\), and \(\hat{e}_1\) is in \(\hat{N}\), not in \(\hat{\matching}\). Let \(u_i\) be the copy of \(u\) that is incident with \(\hat{e}_1\). Since \(u_i\) is the first vertex of \(\hat{T}\), and \(\hat{T}\) is proper, \(u_i\) is not incident with any edge from \(\hat{\matching}\). By construction of \(\hat{\graph}\), every copy of \(u\) must be \(\hat{N}\)-covered. Hence, \(d_u^{\matching\setminus N}\leq d_u^{N\setminus\matching}-1\), which means \(d_u^{\matching}\leq d_u^{N}-1\). Therefore, \(d_u^{T\symdif\matching}=d_u^\matching+1\leq d_u^{N}\leq c_u\). By symmetry of \(u\) and \(v\), we also have \(d_v^{T\symdif\matching}\leq c_v\) both if \(e_k\in\matching\) and \(e_k\notin\matching\).
        
        \emph{Case 2: \(\hat{T}\) is a \(\hat{\matching}\)-augmenting cycle.} In this case \(u=v\) and exactly one of \(e_1\) and \(e_k\) is in \(\matching\) and one is not, which means \(d_u^{T\symdif\matching}=d_u^\matching\leq c_u\).
    \end{proof}

\paragraph{Auxiliary Construction.}\label{subsec: auxiliary construction}
    We will use a construction given in \cite{Farczadi2013Network}, to transform a pair \(\gwcm\) into another one \(\gwcmaux\), where \(\graph'\) is an auxiliary unit-capacity graph.

    Construction: \(\gwcm\rightarrow\gwcmaux\)
    \begin{enumerate}[1.]
        \item 
            For each \(v\in\vertexSet\), create the set \(C_v=\CBra{v_1,\ldots,v_{\capacity_v}}\) of \(\capacity_v\) copies of \(v\), add \(C_v\) to \(\vertexSet(\graph')\), and initialize \(J(v)=\CBra{1,\ldots,\capacity_v}\).
        \item
            For each \(\shortEdge{u}{v}\in\matching\), add a single edge \(\shortEdge{u_i}{v_j}\) to both \(\edgeSet(\graph')\) and \(\matching'\) with edge-weight \(w_{uv}\), where \(i\in J(u)\) and \(j\in J(v)\) are chosen arbitrarily. Remove \(i\) and \(j\) from \(J(u)\) and \(J(v)\), respectively.
        \item
            For each edge \(\shortEdge{u}{v}\in\edgeSet\setminus\matching\), add an edge \(\shortEdge{u_i}{v_j}\) to \(\edgeSet(\graph')\) with edge-weight \(w_{uv}\), for all \(u_i\in C_u\) and \(v_j\in C_v\).
    \end{enumerate}
    See \cref{fig: aux construction} for an example. 
    In this figure it is easy to see that the matching \(\matching'\) in \(\graph'\) is not maximum, even though \(\matching\) is maximum in \(\graph\).\footnote{It was stated in \cite[corollary 1]{Farczadi2013Network} that \(\matching\) is maximum if and only if \(\matching'\) is maximum, but this example shows this to be false.} 
    
    We define a map \(\eta\) to go back from the auxiliary graph \(\graph'\) to the original graph \(\graph\). Specifically, if \(u_i\in \vertexSet(\graph')\cap C_u\) for some \(u\in\vertexSet\), then \(\eta(u_i):=u\), and if \(\shortEdge{u_i}{v_j}\in\edgeSet(\graph')\) such that \(u_i\in C_u\), \(v_j\in C_v\) for some \(u,v\in\vertexSet\), then \(\eta(\shortEdge{u_i}{v_j}):=\shortEdge{u}{v}\). This extends in the obvious way to paths, cycles, walks, and so on.
    
    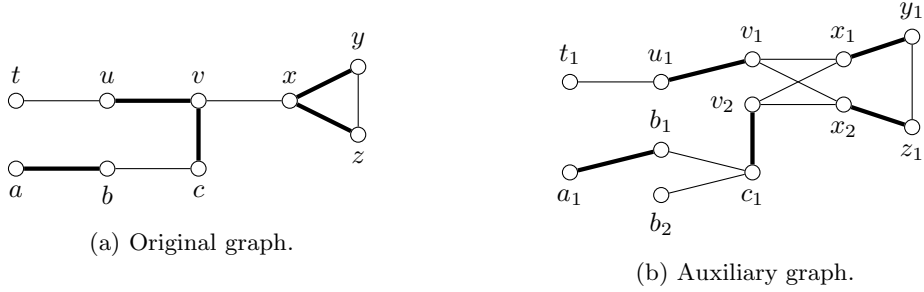
\begin{figure}[t]
        \centering
        \begin{subfigure}{.45\textwidth}
            \centering
            \begin{tikzpicture}[
    scale=.6,
    circ/.style={
        circle,
        draw=black,
        inner sep=2pt
    },
    notMatched/.style={},
    matched/.style={ultra thick},
    fracMatched/.style={dashed,ultra thick}
]

\node[circ,label={above:\(t\)}] (u) at (0,0) {};
\node[circ,label={above:\(u\)}] (v) at (2,0) {};
\node[circ,label={above:\(v\)}] (w) at (4,0) {};
\node[circ,label={above:\(x\)}] (x) at (6,0) {};
\node[circ,label={above:\(y\)}] (y) at (7.5,.75) {};
\node[circ,label={below:\(z\)}] (z) at (7.5,-.75) {};
\node[circ,label={below:\(c\)}] (c) at (4,-1.5) {};
\node[circ,label={below:\(b\)}] (b) at (2,-1.5) {};
\node[circ,label={below:\(a\)}] (a) at (0,-1.5) {};

\draw[notMatched] (u) to (v);
\draw[matched]    (v) to (w);
\draw[notMatched] (w) to (x);
\draw[matched]    (x) to (y);
\draw[matched]    (x) to (z);
\draw[notMatched] (y) to (z);
\draw[matched]    (a) to (b);
\draw[notMatched] (b) to (c);
\draw[matched]    (c) to (w);

\end{tikzpicture}
            \caption{Original graph.}
            \label{fig: aux construction og}
        \end{subfigure}
        \begin{subfigure}{.45\textwidth}
            \centering
            \begin{tikzpicture}[
    scale=.6,
    circ/.style={
        circle,
        draw=black,
        inner sep=2pt
    },
    notMatched/.style={},
    matched/.style={ultra thick},
    fracMatched/.style={dashed,ultra thick}
]

\node[circ,label={above:\(t_1\)}] (u) at (0,0) {};
\node[circ,label={above:\(u_1\)}] (v) at (2,0) {};
\node[circ,label={above:\(v_1\)}] (w1) at (4,.5) {};
\node[circ,label={left:\(v_2\)}] (w2) at (4,-.5) {};
\node[circ,label={above:\(x_1\)}] (x1) at (6,.5) {};
\node[circ,label={below:\(x_2\)}] (x2) at (6,-.5) {};
\node[circ,label={above:\(y_1\)}] (y) at (7.5,1) {};
\node[circ,label={below:\(z_1\)}] (z) at (7.5,-1) {};
\node[circ,label={below:\(c_1\)}] (c) at (4,-2) {};
\node[circ,label={above:\(b_1\)}] (b1) at (2,-1.5) {};
\node[circ,label={below:\(b_2\)}] (b2) at (2,-2.5) {};
\node[circ,label={below:\(a_1\)}] (a) at (0,-2) {};

\draw[notMatched] (u)  to (v);
\draw[matched]    (v)  to (w1);
\draw[notMatched] (w1) to (x1);
\draw[notMatched] (w1) to (x2);
\draw[notMatched] (w2) to (x1);
\draw[notMatched] (w2) to (x2);
\draw[matched]    (x1) to (y);
\draw[matched]    (x2) to (z);
\draw[notMatched] (y)  to (z);
\draw[matched]    (w2) to (c);
\draw[notMatched] (b1) to (c);
\draw[notMatched] (b2) to (c);
\draw[matched]    (b1) to (a);

\end{tikzpicture}
            \caption{Auxiliary graph.}
            \label{fig: aux construction aux}
        \end{subfigure}
        \caption{Example of the auxiliary construction on an instance \(\gwcm\). Capacities are all 1 except for \(\capacity_v=\capacity_x=\capacity_b=2\). Weights are all 1 except for \(\weight_{bc}=0.5\). The matching is displayed as bold edges.}
        \label{fig: aux construction}
    \end{figure}
    
    \begin{remark}\label{rmk: removing vertices in og and aux}
        If \(\gwcm\) has auxiliary \(\gwcmaux\), and \(X\subseteq\vertexSet\) is any set of vertices which avoids \(\matching\), then \((\graph\setminus X)'=\graph'\setminus X'\), where \(X'=\cup_{v\in X} C_v\).
    \end{remark}
    
    The following easy lemma will be useful.
    \begin{lemma}
        \label{lem: mapping}
        Given \(\gwcm\) and auxiliary \(\gwcmaux\), let \(P\) be a feasible \(\matching'\)-augmenting walk. Then, \(\eta(P)\) is a feasible \(\matching\)-augmenting walk.
    \end{lemma}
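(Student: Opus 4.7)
The plan is to verify each of the four properties required of $\eta(P)$ in turn: (i) it is a walk in $\graph$, (ii) it is $\matching$-alternating, (iii) it is $\matching$-augmenting, and (iv) it is feasible. The first three are essentially direct consequences of the construction; the real content lies in (iv), so I would spend most of the write-up there.

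First, I would record the basic book-keeping. Consecutive edges in $P$ share a vertex $w_i \in \vertexSet(\graph')$, so consecutive edges in $\eta(P)$ share $\eta(w_i) \in \vertexSet$, giving a well-defined walk. By step~2 of the construction, every edge of $\matching$ has a unique pre-image in $\matching'$ and conversely, while by step~3 every edge in $\edgeSet \setminus \matching$ corresponds to a complete bipartite family of pre-images, all outside $\matching'$; hence $\eta$ sends $\matching'$-edges to $\matching$-edges and non-$\matching'$-edges to non-$\matching$-edges, so alternation carries over to $\eta(P)$. Because $\eta$ is weight-preserving edge by edge, $\weight(P \cap \matching') = \weight(\eta(P) \cap \matching)$ and $\weight(P \setminus \matching') = \weight(\eta(P) \setminus \matching)$ as multi-sets, so the augmenting inequality transfers verbatim.

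The main step is showing that $\eta(P)$ is feasible, i.e.\ that some $\varepsilon' > 0$ makes $x^{\matching/\eta(P)}(\varepsilon')$ a fractional $\capacity$-matching of $\graph$. Let $\varepsilon > 0$ be a witness to feasibility of $P$ in $\gwcmaux$, so $x^{\matching'/P}(\varepsilon)$ is a fractional $\1$-matching in $\graph'$. The key identity I would prove is that, for every $v \in \vertexSet$,
\[
\sum_{j=1}^{c_v} \sum_{e' \in \delta(v_j)} x^{\matching'/P}_{e'}(\varepsilon) \;=\; \sum_{e \in \delta(v)} x^{\matching/\eta(P)}_{e}(\varepsilon).
\]
To see this, I would split each side into a matched and an unmatched contribution. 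On the left, $\sum_j d_{v_j}^{\matching'} = d_v^{\matching}$, and the multiplicities $\kappa_P(e')$ over matched pre-images $e' \in \matching'$ incident with some $v_j$ collapse to $\kappa_{\eta(P)}(e)$ since each $e \in \delta(v) \cap \matching$ has a unique matched pre-image; for unmatched edges $e \in \delta(v) \setminus \matching$, summing $\kappa_P(\shortEdge{u_i}{v_j})$ over all $i,j$ recovers exactly $\kappa_{\eta(P)}(e)$. The identity then follows by comparing the two $\varepsilon$-augmented expressions term by term.

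With this identity in hand, feasibility of $P$ gives $\sum_{j} \sum_{e'\in\delta(v_j)} x^{\matching'/P}_{e'}(\varepsilon) \leq c_v$, so the capacity constraint at $v$ in $\graph$ holds for the same $\varepsilon$. The only remaining point is the box constraint $0 \leq x_e^{\matching/\eta(P)}(\varepsilon') \leq 1$, which requires $\varepsilon' \cdot \kappa_{\eta(P)}(e) \leq 1$ for every edge $e$ of $\eta(P)$; since $\eta(P)$ has finitely many edges each with finite multiplicity, shrinking $\varepsilon' \in (0,\varepsilon]$ if necessary secures this while keeping the capacity inequalities intact. I expect the only delicate point to be the careful accounting in the key identity above, especially making sure that matched edges (which have a unique pre-image) and unmatched edges (which have a whole biclique of pre-images) are handled separately; once this is done, feasibility and hence the lemma follow.
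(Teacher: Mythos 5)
Your proof is correct, but the key step (feasibility) goes by a genuinely different route than the paper's. The paper handles the walk, alternation, and augmentation properties exactly as you do, and then disposes of feasibility in two lines via the endpoint characterization of feasibility for an \(\matching\)-alternating walk: each internal visit pairs one matched with one unmatched edge, so the \(\varepsilon\)-augmentation changes the fractional degree only at the endpoints, and feasibility of \(P=(u;e_1,\ldots,e_k;v)\) reduces to ``\(e_1\in\matching'\) or \(u\) is \(\matching'\)-exposed'' (likewise at \(v\)), which transfers to ``\(\eta(e_1)\in\matching\) or \(\eta(u)\) is \(\matching\)-unsaturated''. You instead verify the LP constraints directly by summing the unit-capacity degree constraints over the \(\capacity_v\) copies of each \(v\) and proving the identity \(\sum_{j}\sum_{e'\in\delta(v_j)}x^{\matching'/P}_{e'}(\varepsilon)=\sum_{e\in\delta(v)}x^{\matching/\eta(P)}_{e}(\varepsilon)\); this identity is right, since matched edges have a unique pre-image while unmatched edges have a biclique of pre-images whose multiplicities sum to \(\kappa_{\eta(P)}(e)\). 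Your version is more self-contained (it does not presuppose the endpoint characterization) at the cost of bookkeeping, and it handles automatically the slightly delicate case where \(u_i\neq v_j\) but \(\eta(u_i)=\eta(v_j)\), which the paper's phrasing glosses over. The one point you should make explicit is why shrinking \(\varepsilon\) to restore the box constraints cannot break the capacity constraints: \(\sum_{e\in\delta(v)}x^{\matching/\eta(P)}_{e}(\varepsilon')\) is affine in \(\varepsilon'\) and is at most \(\capacity_v\) at both \(\varepsilon'=0\) (where it equals \(d_v^{\matching}\)) and \(\varepsilon'=\varepsilon\), hence on the whole interval.
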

    \begin{proof}
        Let \(e_1=\shortEdge{u}{v}\) and \(e_2=\shortEdge{v}{w}\) be two consecutive edges on \(P\). Then \(\eta(e_1)\) and \(\eta(e_2)\) are the corresponding edges on \(\eta(P)\), and they are both incident with \(\eta(v)\). Hence, \(\eta(P)\) is a walk.
        
        For any edge \(e\) on \(P\), we have \(e\in\matching'\) if and only if \(\eta(e)\in\matching\). In addition, \(\weight'_e=\weight_{\eta(e)}\). So, \(\eta(P)\) is an \(\matching\)-augmenting walk.
        
        Suppose \(P=(u;e_1,\ldots,e_k;v)\). Feasibility of \(P\) means that either \(e_1\in\matching'\), or \(u\) is \(\matching'\)-exposed. Likewise for \(e_k\) and \(v\). It follows that either \(\eta(e_1)\in\matching\), or \(\eta(u)\) is \(\matching\)-unsaturated. Likewise for \(\eta(e_k)\) and \(\eta(v)\). This means \(\eta(P)\) is feasible.
    \end{proof}
    
    We will need the following theorem.
    
    \begin{theorem}\label{thm: G not stable iff G' contains}
       An \(\matching\)-vertex-stabilizer instance \(\gwcm\) is not stable if and only if the graph \(\graph'\) in the auxiliary construction \(\gwcmaux\) contains at least one of the following: (i) an \(\matching'\)-augmenting flower; (ii) an \(\matching'\)-augmenting bi-cycle; (iii) a proper \(\matching'\)-augmenting path; (iv) an \(\matching'\)-augmenting cycle.
    \end{theorem}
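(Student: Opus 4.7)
The plan is to prove each direction of the biconditional separately, with $(\Leftarrow)$ following directly from \cref{lem: mapping} and the $\varepsilon$-augmentation framework, and $(\Rightarrow)$ being reduced to a unit-capacity statement that can be invoked from \cite{Koh2020Stabilizing}.

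For the $(\Leftarrow)$ direction, I would first observe that each of the four structures gives rise to a feasible $M'$-augmenting walk $W'$ in $G'$: proper $M'$-augmenting paths and $M'$-augmenting cycles are themselves feasible by their very definitions; for an $M'$-augmenting flower $C\cup P$ the walk $(P,C,P^{-1})$ is feasible (with $\varepsilon = 1/2$ giving a half-integral fractional matching), and for an $M'$-augmenting bi-cycle $C\cup P\cup D$ the walk $(P,D,P^{-1},C)$ plays the same role. Applying \cref{lem: mapping}, $W := \eta(W')$ is then a feasible $M$-augmenting walk in $G$, and a direct computation shows that the $\varepsilon$-augmentation $x^{M/W}(\varepsilon)$ is a fractional $c$-matching of weight
\[
    w(M) + \varepsilon\bigl(w(W\setminus M) - w(W\cap M)\bigr) > w(M) = \nu^c(G)
\]
for all sufficiently small $\varepsilon > 0$. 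This witnesses $\nu_f^c(G) > \nu^c(G)$, so $G$ (and hence the instance) is not stable.

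For the $(\Rightarrow)$ direction, suppose the instance is not stable. Since $M$ is given as a max-weight $c$-matching, this is equivalent to $\nu_f^c(G) > w(M)$. Noting that $w(M) = w'(M')$, the plan is to establish $\nu_f(G') \geq \nu_f^c(G)$, which will then give $\nu_f(G') > w'(M')$; invoking the unit-capacity characterization of stability from \cite{Koh2020Stabilizing} on $[(G',w',\mathbf{1}),M']$ will produce one of the four augmenting structures in $G'$. To show $\nu_f(G') \geq \nu_f^c(G)$ I would lift any fractional $c$-matching $x$ in $G$ to a fractional matching $x'$ in $G'$ of identical weight: on the unique copy of each matching edge $\{u,v\}\in M$ set $x'_{\{u_i,v_j\}} := x_{uv}$, and for each non-matching edge $\{u,v\}\in E\setminus M$ distribute the mass $x_{uv}$ among its $c_u c_v$ copies in a manner compatible with unit capacities at every copy.

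The main obstacle is demonstrating the feasibility of this lift. At each vertex $u$, the total residual capacity over its copies after matching assignments equals $c_u - \sum_{v:\,uv\in M} x_{uv}$, which by the $c$-matching constraint is at least $\sum_{v:\,uv\notin M} x_{uv}$, the total non-matching mass to be routed at $u$; an analogous bound holds at each $v$. However, the distributions across copies must simultaneously respect the unit capacities at both $u$-copies and $v$-copies for every non-matching edge, so feasibility does not follow edge-by-edge and will require a joint transportation/max-flow style argument on a suitably constructed bipartite auxiliary network. I expect this to be the most delicate step of the proof; once handled, the remainder of the argument is a straightforward chain of LP-value comparisons.
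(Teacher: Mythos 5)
Your proof takes a genuinely different, more self-contained route than the paper's. The paper disposes of the whole equivalence in one paragraph by citing Theorem 2 of \cite{Farczadi2013Network}, which states that \(\gwcm\) is a stable instance if and only if \(\gwcmaux\) is, and then splits the unit-capacity side into two cases: if \(\matching'\) is maximum-weight, Theorem 1 of \cite{Koh2020Stabilizing} yields a flower or bi-cycle; if not, standard matching theory yields a proper augmenting path or cycle. You instead re-derive the two directions of that equivalence that are actually needed. Your \((\Leftarrow)\) argument (each structure is a feasible \(\matching'\)-augmenting walk, map it through \(\eta\) via \cref{lem: mapping}, then \(\varepsilon\)-augment) is correct and is essentially the argument the paper gives for the forward direction of \cref{thm: feasible augmenting walk not stable}; the one loose end is that feasibility of the bi-cycle walk \((P,D,P^{-1},C)\) is not part of its definition and needs the (easy) check that every vertex carries load at most \(1\) under the \(\varepsilon\)-augmentation. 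Going through walks here rather than through an LP comparison is not just a stylistic choice: the inequality \(\nu_f(\graph')\le\nufcG\) that a purely LP-based \((\Leftarrow)\) would require is false in general (two capacity-\(2\) vertices joined by a single unmatched edge give \(\nu_f(\graph')=2>1=\nufcG\)), so your detour is the right one.

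For \((\Rightarrow)\) your plan is sound, but two points need tightening. First, the lifting step you flag as delicate --- realizing any fractional \(\capacity\)-matching \(x\) of \(\graph\) as a fractional matching of \(\graph'\) of equal weight --- does hold, and no global transportation argument is needed because the problem decouples vertex by vertex: at each \(u\), distribute the unmatched mass \(\sum_{v:uv\notin\matching}x_{uv}\le\capacity_u-\sum_{v:uv\in\matching}x_{uv}\) among the copies of \(u\) subject to their residual capacities (feasible since the totals match), giving row marginals \(r^{uv}_i\) for each unmatched edge \(uv\); do the same independently at \(v\) to get column marginals \(s^{uv}_j\); then \(x'_{u_iv_j}:=r^{uv}_i s^{uv}_j/x_{uv}\) realizes both marginals simultaneously. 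Second, you should make the paper's case split explicit: Theorem 1 of \cite{Koh2020Stabilizing} characterizes instability only when \(\matching'\) is maximum-weight in \(\graph'\), and, as \cref{fig: aux construction} shows, \(\matching'\) need not be maximum even when \(\matching\) is. The outcomes (iii) and (iv) are precisely what covers the non-maximum case, so ``invoking the unit-capacity characterization'' must be read as the union of those two statements rather than a single theorem.
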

    \begin{proof} 
        It was proven in \cite[theorem 2]{Farczadi2013Network} that \(\gwcm\) does not correspond to a stable \(\matching\)-vertex-stabilizer instance if and only if \(\gwcmaux\) does not correspond to a stable \(\matching'\)-vertex-stabilizer instance. We distinguish two scenarios for when the latter condition occurs. If \(\matching'\) is maximum-weight, then \(\graph'\) contains an \(\matching'\)-augmenting flower or bi-cycle, see \cite[theorem 1]{Koh2020Stabilizing}. If \(\matching'\) is not maximum-weight, \(\graph'\) must contain a proper \(\matching'\)-augmenting path or cycle, by standard matching theory.
    \end{proof}
    
    We will refer to an augmenting structure of type \((i)-(iv)\) in \cref{thm: G not stable iff G' contains} as a \emph{basic} augmenting structure. The next lemma follows from \cite{Koh2020Stabilizing}.
    
    \begin{lemma}\label{lem: decomposition}
        Let \(\graph'\) be a unit-capacity graph, and \(\matching'\) be any (not necessarily maximum) matching of \(\graph'\).
        \begin{itemize} 
            \item[(a)] For any \(\matching'\)-exposed vertex \(u\), one can compute a feasible \(\matching'\)-augmenting walk starting at \(u\) of length at most \(3\Abs{\vertexSet(\graph')}\), or determine that none exists, in polynomial time.
            \item[(b)] A feasible \(\matching'\)-augmenting \(u v\)-walk contains a feasible \(\matching'\)-augmenting \(u v\)-path (proper if \(u\neq v\)), an \(\matching'\)-augmenting cycle, an \(\matching'\)-augmenting flower rooted at \(u\) or \(v\), or an \(\matching'\)-augmenting bi-cycle. Furthermore, this augmenting structure can be computed in polynomial time.
        \end{itemize}
    \end{lemma}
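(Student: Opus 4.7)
My plan is to follow the strategy of Koh and Sanità~\cite{Koh2020Stabilizing}, adapting it to the setting where $\matching'$ need not be maximum-weight.

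\textbf{Part (a).} I would run an Edmonds-style weighted alternating search rooted at $u$. Since $u$ is $\matching'$-exposed and $\graph'$ has unit capacities, feasibility at the starting endpoint is automatic. For each vertex $v$, I maintain two labels (\emph{even} and \emph{odd}), each recording the best weighted gain $\weight(\cdot\setminus\matching')-\weight(\cdot\cap\matching')$ over $\matching'$-alternating walks from $u$ to $v$ ending with an edge in $\matching'$ or not, respectively. Odd cycles encountered during the search are treated as blossoms and contracted in the standard way, and the search terminates the first time some endpoint satisfies the feasibility and strict-positivity conditions. The length bound $3\Abs{\vertexSet(\graph')}$ follows from a standard decomposition argument: any feasible $\matching'$-augmenting walk starting at $u$ can be rewritten as a simple stem to the base of a blossom, the blossom itself, and a return path, each of length at most $\Abs{\vertexSet(\graph')}$.

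\textbf{Part (b).} The plan is an inductive decomposition on the length of the walk $W=(u;e_1,\ldots,e_k;v)$. The base cases are: if $W$ is vertex-simple and $u\neq v$, then $W$ is already a feasible $\matching'$-augmenting $uv$-path, which is proper by \cref{rmk: feasible distinct endpoints is proper}; if $u=v$ and $W$ is a simple even cycle, $W$ is an $\matching'$-augmenting cycle; if $u=v$ and $W$ is a simple odd cycle alternating in the blossom sense, then $W$ is an $\matching'$-blossom and, with empty stem, an $\matching'$-augmenting flower rooted at $u$. Otherwise $W$ contains a repeated vertex $x$, and I split $W$ at two occurrences of $x$ into an inner closed subwalk $W_1$ at $x$ and an outer walk $W_2$ from $u$ to $v$ obtained by concatenating the two remaining pieces. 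The strict inequality $\weight(W\setminus\matching')>\weight(W\cap\matching')$ distributes additively over $W_1$ and $W_2$, so at least one of them is strictly augmenting; feasibility of $W$ at $u$ and $v$ is preserved for $W_2$, and $W_1$ is closed, so its feasibility reduces to a local condition at $x$. I then recurse on the augmenting piece. Each step strictly decreases the total edge count, so after at most $\Abs{\edgeSet(\graph')}$ rounds the procedure terminates at one of the base cases, and the extraction runs in polynomial time.

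\textbf{Main obstacle.} The delicate point is the split at the repeated vertex $x$. The two occurrences of $x$ in $W$ may be arrived at via edges of different matching status, and a naive split can destroy the alternation of $W_1$ or $W_2$, or yield a closed subwalk that is neither a blossom nor a plain augmenting cycle. One must therefore choose the two occurrences of $x$ so that the local alternation pattern at $x$ is preserved---this is exactly what distinguishes the emergence of a flower or bi-cycle (same-parity split, producing an odd closed subwalk) from that of an even augmenting cycle (opposite-parity split). I would invoke the parity-bookkeeping already developed in~\cite{Koh2020Stabilizing} for the unit-capacity case, which certifies that the extracted object is one of the four listed basic augmenting structures and gives the polynomial running time.
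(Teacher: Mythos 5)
Your proposal is essentially the paper's own proof: the paper disposes of (a) by invoking Algorithm 3 and Lemmas 7--8 of \cite{Koh2020Stabilizing}, and of (b) by invoking their Lemma 9 together with \cref{rmk: feasible distinct endpoints is proper}, which is exactly the machinery you reconstruct and then explicitly defer to. The one caveat is that where your sketch departs from that machinery it becomes shaky --- the cited search is a length-indexed dynamic program over walks rather than a first-hit blossom-contraction search with labels unindexed by length, and in (b) recursing on a detached inner closed odd subwalk would discard the stem back to \(u\) or \(v\) that is needed for the resulting flower or bi-cycle to be feasible --- but since you fall back on the reference for precisely these points, the argument goes through as in the paper.
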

    \begin{proof}
        (a) When given a graph \(\graph'\), a matching \(\matching'\), a vertex \(u\), and an integer \(k\), algorithm 3 in \cite{Koh2020Stabilizing} computes a feasible \(\matching'\)-augmenting \(uv\)-walk of length at most \(k\), or determines none exist, for all \(v\in\vertexSet(\graph')\). Lemma 7 and 8 in \cite{Koh2020Stabilizing} show correctness of the algorithm. The algorithm is polynomial time in \(k\), \(\Abs{\vertexSet(\graph')}\), and \(\Abs{\edgeSet(\graph')}\). Since we use \(k=3\Abs{\vertexSet(\graph')}\), it is polynomial time.  
        As mentioned, algorithm 3 in \cite{Koh2020Stabilizing} actually returns one \(u v\)-walk per \(v\in\vertexSet(\graph')\), if at least one exists for \(v\). We just need one such walk, so if for at least one \(v\) a \(u v\)-walk is returned, we arbitrarily choose one, otherwise we know no such walk starting at \(u\) exists.
        
        (b)
        Lemma 9 in \cite{Koh2020Stabilizing} directly gives us that a feasible \(\matching'\)-augmenting \(u v\)-walk contains a feasible \(\matching'\)-augmenting \(u v\)-path, an \(\matching'\)-augmenting cycle, an \(\matching'\)-augmenting flower rooted at \(u\) or \(v\), or an \(\matching'\)-augmenting bi-cycle. By \cref{rmk: feasible distinct endpoints is proper} the path is proper if \(u\neq v\). Lemma 9 in \cite{Koh2020Stabilizing} is proven in a constructive way, hence it also gives a way to compute the augmenting structure in polynomial time.
    \end{proof}
    
    The next theorem is standard.
    \begin{theorem}
        \label{thm: feasible augmenting walk not stable}
        An \(\matching\)-vertex-stabilizer instance \(\gwcm\) is stable if and only if \(\graph\) does not contain a feasible \(\matching\)-augmenting walk.
    \end{theorem}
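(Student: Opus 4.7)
The plan is to prove both directions by contrapositive, handling the forward direction by a direct $\varepsilon$-augmentation calculation and the backward direction by appealing to \cref{thm: G not stable iff G' contains} together with \cref{lem: mapping}.

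For the forward direction, assume the instance is stable, so $w(M)=\nu^{\capacity}(\graph)=\nu_f^{\capacity}(\graph)$. Suppose toward contradiction that $W=(u;e_1,\ldots,e_k;v)$ is a feasible \(\matching\)-augmenting walk, and let $\varepsilon>0$ be such that $x:=x^{\matching/W}(\varepsilon)$ is a fractional $\capacity$-matching. A direct computation, treating $W\setminus\matching$ and $W\cap\matching$ as multi-sets weighted by $\kappa$, yields
\[
\weight^\top x \;=\; \weight(\matching) + \varepsilon\bigl(\weight(W\setminus\matching)-\weight(W\cap\matching)\bigr) \;>\; \weight(\matching) \;=\; \nu_f^{\capacity}(\graph),
\]
since $W$ is $\matching$-augmenting. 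This contradicts feasibility for the LP \eqref{eq:fracmatching}.

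For the backward direction, assume the instance is not stable. By \cref{thm: G not stable iff G' contains}, the auxiliary graph $\graph'$ contains one of the four basic augmenting structures. The cases of a proper \(\matching'\)-augmenting path and an \(\matching'\)-augmenting cycle are themselves feasible \(\matching'\)-augmenting walks by definition (using \cref{rmk: feasible distinct endpoints is proper} for paths with distinct endpoints). For an \(\matching'\)-augmenting flower $C\cup P$ with base $u$ and root $v$, the walk $W':=(v;P^{-1},C,P;v)$ is \(\matching'\)-alternating and feasible by definition of flower; edges of $P$ appear with multiplicity $2$ and edges of $C$ with multiplicity $1$, so the augmenting inequality $\weight(C\setminus\matching')+2\weight(P\setminus\matching')>\weight(C\cap\matching')+2\weight(P\cap\matching')$ is exactly $\weight(W'\setminus\matching')>\weight(W'\cap\matching')$ with multiplicities. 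The bi-cycle case is handled analogously using the walk $(P,D,P^{-1},C)$ (and the fact that its endpoints, being blossom bases, satisfy the degree conditions needed for feasibility in unit capacity). In every case, $\graph'$ contains a feasible \(\matching'\)-augmenting walk $W'$, and \cref{lem: mapping} then yields that $\eta(W')$ is a feasible \(\matching\)-augmenting walk in $\graph$.

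The only step that requires care is verifying, for the flower and bi-cycle cases, that the walks $(P^{-1},C,P)$ and $(P,D,P^{-1},C)$ are indeed feasible with respect to the $\varepsilon$-augmentation definition; this is essentially built into the definitions (feasibility of the flower walk is explicitly assumed, and for the bi-cycle the blossom structure at each base contributes one matched and one unmatched edge, keeping the degree bound in unit capacity intact for small enough $\varepsilon$). Everything else is a bookkeeping translation between the basic augmenting structures and the walk formulation used throughout the paper.
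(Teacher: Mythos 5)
Your proposal is correct and follows essentially the same route as the paper's proof: the forward direction is the identical \(\varepsilon\)-augmentation computation (merely phrased as a contradiction rather than a contraposition), and the backward direction invokes \cref{thm: G not stable iff G' contains} followed by \cref{lem: mapping} exactly as the paper does. The only difference is that you spell out why each basic augmenting structure is itself a feasible \(\matching'\)-augmenting walk, a step the paper dismisses with ``clearly''; your extra bookkeeping there is sound.
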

    \begin{proof}
        We prove both directions by contraposition.
        
        (\(\Rightarrow\))
        Assume there exists a feasible \(\matching\)-augmenting walk \(W\). Since \(W\) is augmenting, \(\weight(W\setminus\matching)>\weight(W\cap\matching)\), and since \(W\) is feasible, \(x^{\matching/W}(\varepsilon)\) is a fractional \(\capacity\)-matching. Together they imply
        \begin{equation}
            \nufcG \geq \weight^\top x^{\matching/W}(\varepsilon) 
                = \weight(\matching) - \varepsilon \weight(W\cap\matching) + \varepsilon \weight(W\setminus\matching)
                > \weight(\matching),
        \end{equation}
        i.e., the instance \(\gwcm\) is not stable.

        (\(\Leftarrow\))
        Assume the instance is not stable. Then by \cref{thm: G not stable iff G' contains}, the graph \(\graph'\) from the auxiliary \(\gwcmaux\) contains a basic augmenting structure, which clearly is a feasible \(\matching'\)-augmenting walk \(P\). Then \(\eta(P)\) is a feasible \(\matching\)-augmenting walk, by \cref{lem: mapping}.
    \end{proof}
\section{\texorpdfstring{\(M\)}{M}-vertex-stabilizer}\label{sec: M-vertex-stab}

The goal of this section is to prove the following theorem.

\begin{theorem}
    \label{thm: M-vertex-stab poly-time}
    The \(\matching\)-vertex-stabilizer problem on weighted, capacitated graphs can be solved in polynomial time.
\end{theorem}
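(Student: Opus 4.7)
The plan is to reduce the $\matching$-vertex-stabilizer problem to a structured minimum vertex cover problem on an auxiliary graph, extending the approach of \cite{Koh2020Stabilizing} from the unit-capacity case via the auxiliary construction recalled in \cref{subsec: auxiliary construction}. Given the input $\gwcm$, first build the auxiliary pair $\gwcmaux$. By \cref{thm: G not stable iff G' contains} together with \cref{rmk: removing vertices in og and aux}, a set $S$ of $\matching$-exposed vertices of $\graph$ is an $\matching$-vertex-stabilizer of $\gwcm$ if and only if $\graph'\setminus S'$, with $S'=\bigcup_{v\in S}C_v$, contains no basic $\matching'$-augmenting structure. The problem therefore becomes: find a minimum collection of exposed-vertex groups $\{C_v\}$ whose removal destroys every basic $\matching'$-augmenting structure in $\graph'$.

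Next, I would enumerate the relevant augmenting structures. Invoke \cref{lem: decomposition}(a) from each $\matching'$-exposed vertex of $\graph'$ to obtain, in polynomial time, a short feasible $\matching'$-augmenting walk whenever one exists; by \cref{lem: decomposition}(b) each such walk contains a basic augmenting structure. Augmenting structures of types (i)--(iii) in \cref{thm: G not stable iff G' contains} touch at least one $\matching'$-exposed vertex, so they are directly witnessed by the enumeration. For type (iv), I would rule out the remaining case: an $\matching'$-augmenting cycle whose vertices are all $\matching'$-covered would, under the map $\eta$ and an argument analogous to the proof of \cref{thm: M max iff no augmenting trail}, yield a proper $\matching$-augmenting trail in $\graph$, contradicting the maximality of $\matching$. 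Hence every remaining basic augmenting structure is captured by a feasible $\matching'$-augmenting walk with $\matching'$-exposed endpoints.

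Finally, model the restricted destruction problem as minimum vertex cover. Let $H$ be the graph whose vertex set is the set of $\matching$-exposed vertices of $\graph$, with an edge $\{u,v\}$ (a self-loop when $u=v$) for every pair such that some feasible $\matching'$-augmenting walk has both endpoints in $C_u\cup C_v$. Since removing $C_u$ from $\graph'$ destroys every basic augmenting structure with an endpoint in $C_u$, the $\matching$-vertex-stabilizers of $\gwcm$ correspond exactly to the vertex covers of $H$ (self-loops forcing inclusion). I would then compute a minimum vertex cover of $H$ in polynomial time by establishing a K\"onig-type bipartite structure on $H$, inherited from the alternating-walk framework on $\graph'$, paralleling the unit-capacity argument of \cite{Koh2020Stabilizing}.

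The principal obstacle is the group structure: in $\graph'$ one could in principle destroy an augmenting structure by deleting only a proper subset of some $C_v$, which is not a valid $\graph$-operation. I expect the bulk of the technical work to go into showing that imposing the group constraint $S'=\bigcup_{v\in S}C_v$ does not worsen the optimum, by arguing that all copies in $C_v$ (for $\matching$-exposed $v$) are interchangeable as endpoints of feasible $\matching'$-augmenting walks. A secondary difficulty is that, unlike in the unit-capacity case, $\matching'$ need not be maximum-weight in $\graph'$, so all four types of basic augmenting structures can coexist and must be handled simultaneously in the K\"onig-style bipartite argument for $H$.
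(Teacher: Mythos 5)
Your reduction to minimum vertex cover on $H$ has a genuine gap at its core: you never show how to solve that vertex cover problem, and the ``K\"onig-type bipartite structure'' you hope to establish does not exist in general. The reason the paper's algorithm runs in polynomial time is not that the choices form a bipartite cover instance, but that there are \emph{no genuine binary choices at all} when $\matching$ is maximum-weight: the paper proves (\cref{lem: traceback}) that a proper $\matching'$-augmenting path between copies of two \emph{distinct} $\matching$-exposed vertices $\eta(u_i)\neq\eta(v_j)$ must, when mapped by $\eta$, repeat an edge of $\graph$ in opposite directions (via \cref{thm: M max M' not}, since a tieless such path would contradict maximality of $\matching$ by \cref{thm: M max iff no augmenting trail}), and the resulting traceback yields a feasible $\matching$-augmenting closed walk whose only exposed vertex is one specific endpoint --- which is therefore \emph{forced} into every stabilizer. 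Without this, your graph $H$ contains honest ``remove $u$ or remove $v$'' edges, and that two-choice situation is exactly what makes the variant with non-maximum $\matching$ NP-hard (it only admits a 2-approximation in \cite{Koh2020Stabilizing} and in \cref{thm: M-vertex-stab generalized}); your proposal gives no mechanism to escape it.

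Two further steps in your enumeration are incorrect. First, structures of types (i)--(iii) need \emph{not} touch an $\matching'$-exposed vertex: a proper augmenting path may begin and end with matched edges, and a flower's root may be covered. Such all-covered structures cannot be destroyed by removing exposed vertices and are precisely the certificates that the instance \emph{cannot be stabilized} --- a possibility your vertex-cover model does not represent (the paper's algorithm returns ``infeasible'' in these cases). Second, your attempt to rule out type (iv) fails: an $\matching'$-augmenting cycle maps under $\eta$ to a closed walk that may repeat edges of $\graph$ (because of ties), so it is not necessarily a trail and does not contradict \cref{thm: M max iff no augmenting trail}; such cycles can genuinely occur and again witness infeasibility. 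Your instinct about the group structure $S'=\bigcup_{v\in S}C_v$ is reasonable but is not where the real difficulty lies; the missing ideas are the traceback argument that eliminates all two-endpoint choices and the treatment of all-covered augmenting structures as infeasibility certificates.
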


\paragraph{Overview of the strategy.} 
    A natural strategy would be to first apply the auxiliary construction described in \cref{subsec: auxiliary construction} to reduce to unit-capacity instances, and then apply the algorithm proposed in \cite{Koh2020Stabilizing} which solves the problem exactly. However, there is a critical issue with this strategy. Namely, the auxiliary construction applied to unstable instances does \emph{not} always preserve maximality of the corresponding matchings, as shown in \cref{fig: aux construction}. In that example, the matching \(\matching'\) is not maximum in \(\graph'\). The algorithm of \cite{Koh2020Stabilizing}, if applied to an instance where the given matching is not maximum, is not guaranteed to find an optimal solution, but only a 2-approximate one (see theorem 12 in \cite{Koh2020Stabilizing}). In addition, since the auxiliary construction splits a vertex into multiple ones, we may even get infeasible solutions. As a concrete example of this, the algorithm of \cite{Koh2020Stabilizing} applied to the instance of \cref{fig: aux construction aux} will include \(b_2\) in its proposed solution. Mapping this solution to our capacitated instance would imply to remove \(b\), which is clearly not allowed as \(b\) is \(\matching\)-covered.
    
    To overtake this issue, we do not apply the algorithm of \cite{Koh2020Stabilizing} as a black-box, but use parts of it (highlighted in \cref{lem: decomposition}) in a careful way. In particular, we use it to compute a sequence of feasible augmenting walks in \(\graph'\). We actually show that the walks in \(\graph'\) which might create the issue described before when mapped backed to \(\graph\), are the walks in which at least one edge of \(\graph\) is traversed more than once in opposite directions, and that have two distinct endpoints. When this happens, we prove that we can modify the walk and get one where the endpoints coincide, which will still be feasible and augmenting. In this latter case, we can then either correctly identify a vertex to remove (the unique endpoint), or determine that the instance cannot be stabilized.

\paragraph{A more detailed description.} 
    We start by defining \emph{ties}.
    \begin{definition}
        Given \(\gwcm\) with auxiliary \(\gwcmaux\), and an \(\matching'\)-alternating path \(P'\), a \emph{tie} in \(P'\) is a pair of unmatched edges \(\CBra{\shortEdge{a}{b},\shortEdge{c}{d}}\) on \(P'\) such that for some distinct \(u,v\in\vertexSet\), either \(\CBra{a,c}\subseteq C_u\) and \(\CBra{b,d}\subseteq C_v\) or \(\CBra{a,d}\subseteq C_u\) and \(\CBra{b,c}\subseteq C_v\). We say \(P'\) is \emph{tieless} if it does not contain a tie.
    \end{definition}

    We now show that if the auxiliary construction does not preserve maximality of the \(\capacity\)-matching \(\matching\), then we must have ties in all proper \(\matching'\)-augmenting paths and cycles. 

    \begin{lemma}
        \label{thm: M max M' not}
        Given \(\gwcm\) with auxiliary \(\gwcmaux\), if \(\matching\) is a maximum-weight \(\capacity\)-matching in \(\graph\), then all proper \(\matching'\)-augmenting paths and cycles contain ties.
    \end{lemma}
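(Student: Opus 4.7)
The plan is to prove the contrapositive: if $\graph'$ contains a tieless proper $\matching'$-augmenting path or cycle $P'$, then $\matching$ is not a maximum-weight $\capacity$-matching in $\graph$. By \cref{thm: M max iff no augmenting trail}, it then suffices to exhibit a proper $\matching$-augmenting trail in $\graph$. The natural candidate is $T := \eta(P')$; by \cref{lem: mapping}, $T$ is already a feasible $\matching$-augmenting walk, so the remaining work is to upgrade it to a trail and to verify propriety.

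For the trail property, I would argue that no edge of $\edgeSet$ is the image under $\eta$ of two distinct edges of $P'$. Suppose distinct $e_1, e_2$ on $P'$ satisfy $\eta(e_1) = \eta(e_2) = \shortEdge{u}{v}$. If $\shortEdge{u}{v} \in \matching$, the auxiliary construction creates exactly one auxiliary edge mapping to it (and it lies in $\matching'$), contradicting $e_1 \neq e_2$. If $\shortEdge{u}{v} \notin \matching$, then both $e_1, e_2$ must be unmatched in $\matching'$; writing $e_1 = \{a_1, b_1\}$ and $e_2 = \{a_2, b_2\}$ with $a_i \in C_u$ and $b_i \in C_v$, two possibilities arise. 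Either $a_1 = a_2$ (or symmetrically $b_1 = b_2$), in which case a single vertex of $P'$ would be incident to two unmatched $P'$-edges, which is impossible on a simple alternating path or cycle; or all four copies are distinct, which is precisely the definition of a tie and contradicts tielessness of $P'$.

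For propriety of $T$, if the endpoints of $T$ in $\vertexSet$ are distinct then \cref{rmk: feasible distinct endpoints is proper} closes the case via feasibility of $T$. The delicate situation is when $T$ is closed at a single vertex $u \in \vertexSet$, which occurs either when $P'$ is a cycle based at some copy of $u$, or when $P'$ is a path whose two endpoints are distinct copies $u_{i_0}, u_{i_1} \in C_u$. I would compute $d_u^{T \symdif \matching}$ by summing the contributions of each copy of $u$ appearing on $P'$: every internal copy (and every copy on a cycle) has one matched and one unmatched incident edge on $P'$, contributing $0$ to the net surplus $d_u^{T \symdif \matching} - d_u^\matching$. In the path case, each of the two endpoint copies contributes $-1$ or $+1$ depending on whether its incident $P'$-edge lies in $\matching'$. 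When both endpoint edges are matched, or one is matched and the other is not, the resulting bound follows immediately from $d_u^\matching \leq \capacity_u$. The subtle subcase is when both endpoint edges are unmatched: propriety of $P'$ then forces both $u_{i_0}$ and $u_{i_1}$ to be $\matching'$-exposed, and since the bijection between edges of $\matching$ and matched edges of $\matching'$ leaves exactly $\capacity_u - d_u^\matching$ exposed copies of $u$ in $\graph'$, having two distinct $\matching'$-exposed copies forces $d_u^\matching \leq \capacity_u - 2$, which gives $d_u^{T \symdif \matching} = d_u^\matching + 2 \leq \capacity_u$.

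The main obstacle is this last subcase: correctly linking propriety of $P'$ at the endpoint copies in $\graph'$ back to sufficient unsaturated capacity at $u$ in $\graph$, together with the bookkeeping needed when several copies of the same vertex $u$ appear on $P'$ as internal vertices.
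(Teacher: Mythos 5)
Your proposal is correct and follows essentially the same route as the paper's proof: argue the contrapositive, use \cref{lem: mapping} to map a tieless $P'$ down to a feasible $\matching$-augmenting trail, and handle propriety by the case analysis in which two distinct $\matching'$-exposed copies of $u$ force $d_u^{\matching}\leq \capacity_u-2$. The only cosmetic difference is that you dispatch the distinct-endpoint case via \cref{rmk: feasible distinct endpoints is proper}, where the paper redoes the degree computation explicitly.
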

    \begin{proof}
        We prove this by contraposition. So, suppose that there is a proper \(\matching'\)-augmenting path or cycle \(P'\) that is tieless. Note that \(P'\) is also feasible. By \cref{lem: mapping}, \(P=\eta(P')\) is a feasible \(\matching\)-augmenting walk. Since \(P'\) is tieless, there is a bijection between \(\edgeSet(P')\) and \(\edgeSet(P)\), and so, as \(P'\) does not repeat edges, neither does \(P\). Hence \(P\) is a feasible \(\matching\)-augmenting trail. We will show that \(P\) is proper.
        
        If \(P'\) is an \(\matching'\)-augmenting cycle, \(P\) is a closed \(\matching\)-augmenting trail of even length. It follows that \(d_v^{P\symdif\matching}=d_v^{\matching}\leq\capacity_v\) for all vertices \(v\) on \(P\), and hence \(P\) is proper.
        
        Now suppose \(P'\) is a proper \(\matching'\)-augmenting path. Let \(P'=(u_i;e_1',\ldots,e_k';v_j)\) and \(u=\eta(u_i)\), \(v=\eta(v_j)\), \(e_1=\eta(e_1')\) and \(e_k=\eta(e_k')\). Note that, because \(P'\) is proper, \(e_1'\notin\matching\) if and only if \(u_i\) is \(\matching'\)-exposed. Likewise for \(e_k'\) and \(v_j\).
        
        \textit{Case 1: \(u=v\).}
        If at most one of \(u_i\) and \(v_j\) is \(\matching'\)-exposed, then at least one of \(e_1'\) and \(e_k'\) is in \(\matching'\) and hence at least one of \(e_1\) and \(e_k\) is in \(\matching\). Therefore, \(d_u^{P\symdif\matching}\leq d_u^\matching\leq \capacity_u\). If both \(u_i\) and \(v_j\) are \(\matching'\)-exposed, then \(e_1',e_k'\notin\matching'\) and hence \(e_1,e_k\notin\matching\). Therefore, \(d_u^{P\symdif\matching}=d_u^\matching+2\).
        By construction there are \(c_u\) copies of \(u\), and since \(u_i\) and \(v_j\) are already two of those copies, and they are exposed, we have \(d_u^\matching\leq c_u-2\). Thus \(d_u^{P\symdif\matching}\leq c_u\).
        
        \textit{Case 2: \(u\neq v\).}
        If \(e_1'\in\matching'\), then \(e_1\in\matching\), and so we have \(d_u^{P\symdif\matching}=d_u^\matching-1\leq c_u\). If \(e_1'\notin\matching'\), then \(e_1\notin\matching\), and so we have \(d_u^{P\symdif\matching}=d_u^\matching+1\). Using the same reasoning as in case 1, we can conclude that \(d_u^\matching\leq c_u-1\) because \(u_i\) is \(\matching'\)-exposed, and therefore \(d_u^{P\symdif\matching}\leq c_u\). The argument is analogous for \(v\).
        
        In all cases \(P\) is a proper \(\matching\)-augmenting trail. It follows by \cref{thm: M max iff no augmenting trail} that \(\matching\) is not a maximum-weight \(\capacity\)-matching in \(\graph\).
    \end{proof}    

    We now define the operation of \emph{traceback}, which we will use to modify the feasible augmenting walks, when needed.

    \begin{definition}
        Given \(\gwcm\) and an \(\matching\)-alternating walk \(P=(u;e_1,\ldots,\allowbreak e_k;\allowbreak v)\) which repeats an edge in opposite directions, let \(t\) be the least index such that \(e_t=e_s\) for some \(s<t\), and \(e_s\) and \(e_t\) are traversed in opposite directions by \(P\). Then the \emph{\(u\)-traceback} and \emph{\(v\)-traceback} of \(P\) are defined as the walks \(\tb(P,u)=(e_1,\ldots,e_t,e_{s-1},e_{s-2},\ldots,e_1)\) and \(\tb(P,v)=(e_k,e_{k-1}\ldots,e_s,e_{t+1},e_{t+2},\ldots,e_k)\).
    \end{definition}
    
    The next lemma explains how to use the traceback operation. 
    \begin{lemma}\label{lem: traceback}
        Given \(\gwcm\) such that \(\matching\) is maximum-weight, and auxiliary \(\gwcmaux\), let \(P'=(u_i;e_1',\allowbreak\ldots,e_k';v_j)\) be a proper \(\matching'\)-augmenting path such that both \(u_i\) and \(v_j\) are \(\matching'\)-exposed and \(\eta(u_i)\neq\eta(v_j)\). Then \(\tb(\eta(P'),\eta(u_i))\) and \(\tb(\eta(P'),\eta(v_j))\) are well-defined, feasible \(\matching\)-alternating walks, and at least one of them is \(\matching\)-augmenting.
    \end{lemma}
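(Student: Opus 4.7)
The plan is to verify the three parts of the statement in turn: well-definedness of the two tracebacks (which requires $P := \eta(P')$ to contain an edge traversed in opposite directions), that each traceback is a feasible $\matching$-alternating walk, and that at least one is $\matching$-augmenting. By \cref{lem: mapping}, $P$ is a feasible $\matching$-augmenting walk from $\eta(u_i)$ to $\eta(v_j)$, and since $u_i, v_j$ are $\matching'$-exposed, each of $\eta(u_i), \eta(v_j)$ has an exposed copy in the auxiliary graph and hence is $\matching$-unsaturated in $\graph$.

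The hard part will be well-definedness. If $P$ had no repeated edges at all, the unsaturatedness of its endpoints would make it a proper $\matching$-augmenting trail, contradicting \cref{thm: M max iff no augmenting trail}; hence $P$ repeats some edge. To upgrade this to an opposite-direction repeat, I argue by contradiction: assume all repeats are same-direction, pick a repeated edge $e=uv$ (all traversals $u\to v$) with consecutive traversals at positions $p<q$. The subwalk of $P$ from position $p+1$ to $q-1$ goes from $v$ back to $u$ and, together with the traversal at position $p$, forms a $\matching$-alternating closed walk $C$ at $u$ whose first edge is unmatched and whose last edge is matched; this pattern at the base makes $C$ proper at $u$. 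A case split on whether $C$ is itself $\matching$-augmenting produces a strictly shorter feasible $\matching$-augmenting walk with only same-direction repeats: either excise $C$ from $P$ (if $C$ is non-augmenting, since then removing $C$ preserves positive augmentation) or take $C$ itself (if $C$ is augmenting). Induction on length — treating open walks with unsaturated endpoints and closed walks with this alternation pattern uniformly, since in both cases properness at the distinguished endpoints is automatic — terminates in a proper $\matching$-augmenting trail or cycle in $\graph$, again contradicting \cref{thm: M max iff no augmenting trail}.

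With well-definedness in hand, the remaining parts are direct. Alternation of $\tb(P,\eta(u_i))=(e_1,\ldots,e_t,e_{s-1},\ldots,e_1)$ is inherited from $P$, using that $e_t=e_s$ has matching status opposite to $e_{s-1}$ and that the reverse traversal $e_{s-1},\ldots,e_1$ alternates because $P$ does. Feasibility reduces to checking the base vertex $\eta(u_i)$, where the first and last edges of the traceback both equal $e_1\notin\matching$ (unmatched because $u_i$ is exposed), giving the constraint $d^{\matching}_{\eta(u_i)}+2\varepsilon\leq c_{\eta(u_i)}$, which holds for small $\varepsilon>0$ by unsaturatedness; internal vertices contribute no constraint by alternation. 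The symmetric argument handles $\tb(P,\eta(v_j))$. For the augmenting claim, a direct count shows each traversal in $P$ appears exactly twice across the combined multisets of the two tracebacks (positions $1,\ldots,s-1$ twice in $\tb(P,\eta(u_i))$; positions $s,\ldots,t$ once in each; positions $t+1,\ldots,k$ twice in $\tb(P,\eta(v_j))$), with matching status preserved, so the sum of their augmentation values equals $2\bigl(w(P\setminus\matching)-w(P\cap\matching)\bigr)>0$; at least one traceback is therefore $\matching$-augmenting.
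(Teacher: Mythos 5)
Your proof is correct and follows essentially the same route as the paper's: establish an opposite-direction repeat by excising same-direction closed alternating subwalks and invoking \cref{thm: M max iff no augmenting trail} (the paper fixes the \emph{least} repeat index so the excised piece is a trail and recurses only on the remainder, while you recurse on both branches; and the paper phrases the final step as a three-way case split where you use the multiset count giving gain sum \(2\bigl(\weight(P\setminus\matching)-\weight(P\cap\matching)\bigr)>0\) --- same substance). One cosmetic slip: the first edge of your closed walk \(C\) is the repeated edge \(e\) itself, so it is unmatched only when \(e\notin\matching\); what you actually need (and have) is that the first and last edges of \(C\) carry opposite matching status, which holds in either case.
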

    \begin{proof}
        Let \(P=\eta(P')=(u;e_1,\ldots,e_k,v)\). By \cref{lem: mapping}, \(P\) is a feasible \(\matching\)-augmenting walk, and even proper by \cref{rmk: feasible distinct endpoints is proper}, since \(u\neq v\). To show that \(\tb(P,u)\) and \(\tb(P,v)\) are well-defined, we must show that \(P\) traverses some edge in opposite directions. By \cref{thm: M max M' not} we already have that \(P'\) contains a tie, and hence that \(P\) traverses some edge twice. 
        We now show that there must exist at least one edge that is traversed in opposite direction. Suppose not, let \(t\) be the least index such that \(e_t=e_s\) for some \(s<t\). Decompose \(P\) as \((P_1,e_s,P_2,e_t,P_3)\).
        \begin{claim}
            If \(P\) traverses \(e_s\) and \(e_t\) in the same direction, then \((P_1,e_s,P_3)\) is a shorter proper \(\matching\)-augmenting walk.
        \end{claim}
        \begin{proof}
            For notation, define \(P_2^+=(P_2,e_t)\). By definition, \(P_2^+\) is an \(\matching\)-alternating closed trail of even length. It follows that \(d_v^{P_2^+\symdif\matching}=d_v^{\matching}\leq\capacity_v\) for all vertices \(v\) on \(P_2^+\), and hence \(P_2^+\) is proper. Since \(\matching\) is maximum-weight, \cref{thm: M max iff no augmenting trail} implies that \(P_2^+\) cannot be \(\matching\)-augmenting. However, \(P\) is \(\matching\)-augmenting, which means the augmenting part must come from \(P\setminus P_2^+\). Hence, \((P_1,e_s,P_3)\) is an \(\matching\)-augmenting walk. It is proper because \(P\) is proper.
        \end{proof}
        By this claim \(W=(P_1,e_s,P_3)\) is a shorter proper \(\matching\)-augmenting walk. \(W\) also necessarily repeats an edge, because otherwise \(W\) is a proper \(\matching\)-augmenting trail, contradicting that \(\matching\) is maximum-weight, by \cref{thm: M max iff no augmenting trail}. Then we can apply the claim again, to find an even shorter proper \(\matching\)-augmenting walk. This argument can be repeated until eventually we reach a contradiction.
        
        Thus there is at least one edge traversed in opposite direction, hence \(\tb(P,u)\) and \(\tb(P,v)\) are well-defined. Clearly \(\tb(P,u)\) and \(\tb(P,v)\) are \(\matching\)-alternating. Furthermore, since \(u_i\) and \(v_j\) are \(\matching'\)-exposed, \(u\) and \(v\) are \(\matching\)-unsaturated. It follows that \(\tb(P,u)\) and \(\tb(P,v)\) are feasible.

        That leaves to show that at least one of them is \(\matching\)-augmenting. For notation, let \(t\) be the least index such that \(e_t=e_s\) for some \(s<t\) and \(e_t\) and \(e_s\) are traversed in opposite direction. As before, decompose \(P\) as \((P_1,e_s,P_2,e_t,P_3)\). Define \(P_2^{++}=(e_s,P_2,e_t)\), \(P_u=\tb(P,u)\), and \(P_v=\tb(P,v)\). Note that \(P_u=(P_1,P_2^{++},P_1^{-1})\) and \(P_v=(P_3^{-1},(P_2^{++})^{-1},P_3)\).
        
        \textit{Case 1: \(\weight(P_1\setminus\matching)-\weight(P_3\setminus\matching)>\weight(P_1\cap\matching)-\weight(P_3\cap\matching)\).}
        Because \(P\) is \(\matching\)-augmenting, we know that
        \begin{equation}
            \weight(P_1\setminus\matching)+\weight(P_2^{++}\setminus\matching)+\weight(P_3\setminus\matching) > \weight(P_1\cap\matching)+\weight(P_2^{++}\cap\matching)+\weight(P_3\cap\matching).
        \end{equation}
        Adding these inequalities, we obtain
        \begin{equation}
            \weight(P_u\setminus\matching)=2\weight(P_1\setminus\matching)+\weight(P_2^{++}\setminus\matching) > 2\weight(P_1\cap\matching)+\weight(P_2^{++}\cap\matching)=\weight(P_u\cap\matching).
        \end{equation}
        Hence, \(P_u\) is \(\matching\)-augmenting.
        
        \textit{Case 2:  \(\weight(P_1\setminus\matching)-\weight(P_3\setminus\matching)<\weight(P_1\cap\matching)-\weight(P_3\cap\matching)\).}
        Analogous to case 1, we find that \(P_v\) is \(\matching\)-augmenting.
        
        \textit{Case 3:  \(\weight(P_1\setminus\matching)-\weight(P_3\setminus\matching)=\weight(P_1\cap\matching)-\weight(P_3\cap\matching)\).}
        Analogous to case 1, we find that both \(P_u\) and \(P_v\) are \(\matching\)-augmenting.
    \end{proof}

\begin{algorithm}[htp]
    \caption{finding an \(\matching\)-vertex-stabilizer}
    \label{alg: M-vertex-stab}
    
    \SetKwInOut{Input}{input}
    \Input{\([\graphwc,\matching]\)}

    compute the auxiliary \(\gwcmaux\) \\
    initialize \(S\leftarrow\emptyset\), \(L\leftarrow M'\)-exposed vertices \\
    \While{ \(L \neq \emptyset\)}{
        select \(u_i \in L\) and compute a feasible \(\matching'\)-augmenting walk starting at \(u_i\) using \cref{lem: decomposition}(a) \\
        \If{no such walk exists}{
            \(L \leftarrow L \setminus \{u_i\}\)
        }
        \Else{
            consider the computed feasible \(\matching'\)-augmenting \(u_i v_j\)-walk\\
            \uIf{both \(\eta(u_i)\) and \(\eta(v_j)\) are \(\matching\)-covered}{
                \Return infeasible
            }
            \uElseIf{\(\eta(u_i)\) is \(\matching\)-covered and \(\eta(v_j)\) is not}{
                \(S\leftarrow S\cup\eta(v_j)\), \(\graph\leftarrow\graph\setminus\eta(v_j)\), \(\graph'\leftarrow\graph'\setminus C_{\eta(v_j)}\), \(L\leftarrow L\setminus C_{\eta(v_j)}\)
            }
            \uElseIf{\(\eta(v_j)\) is \(\matching\)-covered and \(\eta(u_i)\) is not}{
                \(S\leftarrow S\cup\eta(u_i)\),  \(\graph\leftarrow\graph\setminus\eta(u_i)\), \(\graph'\leftarrow\graph'\setminus C_{\eta(u_i)}\), \(L\leftarrow L\setminus C_{\eta(u_i)}\)
            }
            \Else{
                \uIf{\(\eta(u_i)=\eta(v_j)\)}{
                    \(S\leftarrow S\cup\eta(u_i)\), \(\graph\leftarrow\graph\setminus\eta(u_i)\), \(\graph'\leftarrow\graph'\setminus C_{\eta(u_i)}\), \(L\leftarrow L\setminus C_{\eta(u_i)}\)
                }
                \Else{
                    find a basic \(\matching'\)-augmenting structure \(W\) contained in the \(u_i v_j\)-walk using \cref{lem: decomposition}(b) \\
                    \If{\(W\) is an \(\matching'\)-augmenting cycle or bi-cycle}{
                        \Return infeasible
                    }
                    \If{\(W\) is an \(\matching'\)-augmenting flower rooted at \(u_i\)}{
                        \(S\leftarrow S\cup\eta(u_i)\), \(\graph\leftarrow\graph\setminus\eta(u_i)\), \(\graph'\leftarrow\graph'\setminus C_{\eta(u_i)}\), \(L\leftarrow L\setminus C_{\eta(u_i)}\)
                    }
                    \If{\(W\) is an \(\matching'\)-augmenting flower rooted at \(v_j\)}{
                        \(S\leftarrow S\cup\eta(v_j)\), \(\graph\leftarrow\graph\setminus\eta(v_j)\), \(\graph'\leftarrow\graph'\setminus C_{\eta(v_j)}\), \(L\leftarrow L\setminus C_{\eta(v_j)}\)
                    }
                    \If{\(W\) is a proper \(\matching'\)-augmenting \(u_i v_j\)-path}{
                        compute \(\tb(\eta(W),\eta(u_i))\) and \(\tb(\eta(W),\eta(v_j))\) \\
                        \If{\(\tb(\eta(W),\eta(u_i))\) is \(\matching\)-augmenting}{
                            \(S\leftarrow S\cup\eta(u_i)\), \(\graph\leftarrow\graph\setminus\eta(u_i)\), \(\graph'\leftarrow\graph'\setminus C_{\eta(u_i)}\), \(L\leftarrow L\setminus C_{\eta(u_i)}\)
                        }
                        \If{\(\tb(\eta(W),\eta(v_j))\) is \(\matching\)-augmenting}{
                            \(S\leftarrow S\cup\eta(v_j)\), \(\graph\leftarrow\graph\setminus\eta(v_j)\), \(\graph'\leftarrow\graph'\setminus C_{\eta(v_j)}\), \(L\leftarrow L\setminus C_{\eta(v_j)}\)
                        }
                    }
                }
            }
        }
    }
    \uIf{\(\weight(\matching)<\nu^c_f(G)\)}{
        \Return infeasible
    }
    \Else{
        \Return \(S\)
    }
\end{algorithm}

\begin{proof}[Proof of \cref{thm: M-vertex-stab poly-time}]
    Let \(\gwcm\) be the input for the \(\matching\)-vertex-stabilizer problem, with auxiliary \(\gwcmaux\). \Cref{alg: M-vertex-stab} iteratively considers an \(\matching'\)-exposed vertex \(u_i\), and computes a feasible \(\matching'\)-augmenting walk \(U\) starting at \(u_i\), if one exists. \Cref{lem: mapping} implies that \(\eta(U)\) is a feasible \(\matching\)-augmenting walk in \(\graph\). \Cref{thm: feasible augmenting walk not stable} implies that we need to remove at least one vertex of the walk \(\eta(U)\) to stabilize the instance. Note that every vertex \(a \neq u_i,v_j\) of \(U\) is \(\matching'\)-covered, and hence, \(\eta(a)\) is \(\matching\)-covered. Therefore, the only vertices we can potentially remove are \(\eta(u_i)\) or \(\eta(v_j)\). Hence, if both \(\eta(u_i)\) and \(\eta(v_j)\) are \(\matching\)-covered, the instance cannot be stabilized and \cref{alg: M-vertex-stab} checks this in line 9. If only one among \(\eta(u_i)\) and \(\eta(v_j)\) is \(\matching\)-covered, then necessarily we have to remove the \(\matching\)-exposed vertex among the two. \Cref{alg: M-vertex-stab} checks this in line 11 and 13. Note that, by \cref{rmk: removing vertices in og and aux}, instead of computing a new auxiliary for the modified \(\graph\), we can just remove \(C_{\eta(u_i)}\) (resp. \(C_{\eta(v_j)}\)) from \(\graph'\). Similarly, if \(\eta(u_i) = \eta(v_j)\) and \(\eta(u_i)\) is \(\matching\)-exposed, we necessarily have to remove \(\eta(u_i)\). \Cref{alg: M-vertex-stab} checks this in line 16. If instead \(\eta(u_i)\neq\eta(v_j)\), and both are \(\matching'\)-exposed, we apply \cref{lem: decomposition}(b) to find a basic augmenting structure \(W\) contained in \(U\). Once again, we know by \cref{lem: mapping} and \cref{thm: feasible augmenting walk not stable} that we need to remove a vertex in \(\eta(W)\). In case \(W\) is a cycle or bi-cycle, all vertices of \(\eta(W)\) are \(\matching\)-covered so the instance cannot be stabilized and \cref{alg: M-vertex-stab} checks this in line 20. In case \(W\) is a \(\matching'\)-augmenting flower with base \(u_i\) or \(v_j\), \cref{alg: M-vertex-stab} accordingly removes \(\eta(u_i)\) or \(\eta(v_j)\) as all other vertices in \(\eta(W)\) are \(\matching\)-covered, in line 23 and 25. Finally, if \(W\) is a proper (because \(\eta(u_i)\neq\eta(v_j)\)) \(\matching'\)-augmenting path, by \cref{lem: traceback} we know that we can find a feasible \(\matching\)-augmenting walk, where the only \(\matching\)-exposed vertex is either \(\eta(u_i)\) or \(\eta(v_j)\). Once again, this implies that this vertex must be removed. \Cref{alg: M-vertex-stab} does so in lines 29 and 31. 
    
    From the discussion so far, it follows that when we exit the while loop each vertex in \(S\) is a necessary vertex to be removed from \(\graph\), in order to stabilize the instance. We now argue that either removing all vertices in \(S\) is also sufficient, or the instance cannot be stabilized. Suppose that the \(\matching\)-vertex-stabilizer instance given by \(\graph\setminus S\) and \(\matching\) is not stable. \Cref{thm: G not stable iff G' contains} implies that \((\graph\setminus S)'\) contains a basic augmenting structure \(Q\). Note that \(Q\) cannot be an \(\matching'\)-augmenting flower with exposed root, or a proper \(\matching'\)-augmenting path with at least one exposed endpoint. To see this, observe that a flower and path are feasible \(\matching'\)-augmenting walks of length at most \(3\Abs{\vertexSet(\graph')}\) and \(\Abs{\vertexSet(\graph')}\), respectively. Hence, they would have been found by \cref{alg: M-vertex-stab} in line 4, contradicting that \(Q\) exists in \((\graph\setminus S)'\). It follows that \(Q\) is a basic augmenting structure where all vertices are \(\matching'\)-covered. By \cref{lem: mapping} \(\eta(Q)\) is a feasible \(\matching\)-augmenting walk where all vertices are \(\matching\)-covered. This implies that the instance cannot be stabilized. Furthermore, using the \(\varepsilon\)-augmentation of \(\eta(Q)\) we can obtain a fractional \(\capacity\)-matching whose value is strictly greater than \(\weight(\matching)\). Hence, \(\weight(\matching) < \nu^\capacity_f(\graph \setminus S)\). \Cref{alg: M-vertex-stab} correctly determines this in line 32. This proves correctness of our algorithm. 
    
    Finally, we argue about the running time of the algorithm. Note that each operation that the algorithm performs can be done in polynomial time. Furthermore, after each iteration of the while loop, we either determine that the instance cannot be stabilized, or remove a vertex from \(\graph\).  Therefore, the while loop can be executed at most \(n\) times. The result follows.
\end{proof}

We close this section with a remark. The authors in \cite{Koh2020Stabilizing} have also considered the following problem: given a weighted graph \(\graph\) and a (non necessarily maximum-weight) matching \(\matching\), find a minimum-cardinality \(S\subseteq\vertexSet\) such that \(\graph\setminus S\) is stable, and \(\matching\) is a maximum-weight matching in \(\graph\setminus S\), i.e., such that the \(\matching\)-vertex-stabilizer instance given by \(\graph\setminus S\) and \(\matching\) is stable. This is a generalization of our definition of the \(\matching\)-vertex-stabilizer problem, which essentially allows \(\matching\) to be not maximum-weight\footnote{In fact, this is the way the \(\matching\)-vertex-stabilizer problem is defined in~\cite{Koh2020Stabilizing}. We instead use the original definition in \cite{Ahmadian2018Stabilizing,CHANDRASEKARAN201956} which assumes \(\matching\) to be maximum. }. The authors show that this problem is NP-hard, but admits a 2-approximation algorithm (we mentioned this in the strategy overview), which is best possible assuming Unique Game Conjecture. 

With a minor modification of \cref{alg: M-vertex-stab}, we can generalize this result to the capacitated setting. Specifically, we start the algorithm by checking if \(\matching\) is maximum-weight in \(\graph\), and store this in the indicator variable \(\matching_{\max}\). Then, we replace lines 26-31 by \cref{alg: M-vertex-stab modification}.
\begin{algorithm}
    \caption{modification \cref{alg: M-vertex-stab}, lines 26-31}
    \label{alg: M-vertex-stab modification}
    
    \If{\(W\) is a proper \(\matching'\)-augmenting \(u_i v_j\)-path}{
        \If{\(\matching_{\max}\)}{
            compute \(\tb(\eta(W),\eta(u_i))\) and \(\tb(\eta(W),\eta(v_j))\) \\
            \If{\(\tb(\eta(W),\eta(u_i))\) is \(\matching\)-augmenting}{
                \(S\leftarrow S\cup\eta(u_i)\), \(\graph\leftarrow\graph\setminus\eta(u_i)\), \(\graph'\leftarrow\graph'\setminus C_{\eta(u_i)}\), \(L\leftarrow L\setminus C_{\eta(u_i)}\)
            }
            \If{\(\tb(\eta(W),\eta(v_j))\) is \(\matching\)-augmenting}{
                \(S\leftarrow S\cup\eta(v_j)\), \(\graph\leftarrow\graph\setminus\eta(v_j)\), \(\graph'\leftarrow\graph'\setminus C_{\eta(v_j)}\), \(L\leftarrow L\setminus C_{\eta(v_j)}\)
            }
        }
        \Else{
            \(S\leftarrow S\cup\CBra{\eta(u_i),\eta(v_j)}\), \(\graph\leftarrow\graph\setminus\CBra{\eta(u_i),\eta(v_j)}\), \(\graph'\leftarrow\graph'\setminus \Par{C_{\eta(u_i)}\cup C_{\eta(v_j)}}\), \(L\leftarrow L\setminus \Par{C_{\eta(u_i)}\cup C_{\eta(v_j)}}\)
        }
    }
\end{algorithm}

\begin{theorem}
    \label{thm: M-vertex-stab generalized}
    Given a weighted, capacitated graph \(\graphVE\) and a \(c\)-matching \(\matching\), the problem of computing a minimum-cardinality \(S\subseteq\vertexSet\) such that
    \(\graph\setminus S\) is stable, and \(\matching\) is a maximum-weight \(\capacity\)-matching in \(\graph\setminus S\), admits an efficient 2-approximation algorithm.
\end{theorem}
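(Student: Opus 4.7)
The plan is to analyze the modified algorithm via a per-iteration charging argument. First I would handle the case where \(\matching\) is maximum-weight in \(\graph\): then \(\matching_{\max}\) is true and the algorithm is identical to Algorithm \ref{alg: M-vertex-stab}, so by \cref{thm: M-vertex-stab poly-time} the output is optimal, which is trivially a \(2\)-approximation. The interesting case is \(\matching_{\max} = \text{false}\), where the modification replaces the traceback step (whose correctness requires \(\matching\) to be maximum-weight, as used in \cref{lem: traceback}) by conservatively adding both endpoints \(\eta(u_i)\) and \(\eta(v_j)\) to \(S\).

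For correctness, I would mimic the argument from the proof of \cref{thm: M-vertex-stab poly-time}. The only new step to justify is that adding both endpoints in the proper path case is a valid choice: by \cref{lem: mapping}, \(\eta(W)\) is a feasible \(\matching\)-augmenting walk, its internal vertices are \(\matching\)-covered by alternation and hence forbidden to any \(\matching\)-preserving stabilizer, and therefore any feasible stabilizer must contain at least one of \(\eta(u_i), \eta(v_j)\). The final check \(\weight(\matching) < \nufcG\) still certifies infeasibility by LP duality: equality across \(\weight(\matching) \leq \nucG \leq \nufcG\) is needed both for stability of \(\graph \setminus S\) and for maximality of \(\matching\) in it.

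For the \(2\)-approximation, let \(\OPT\) be a minimum feasible solution and let \(S_i\) denote the (non-empty) set of vertices added to \(S\) at iteration \(i\); in the new case \(\Abs{S_i} = 2\), otherwise \(\Abs{S_i} \leq 1\). The key claim is \(\Abs{\OPT \cap S_i} \geq 1\) for every iteration \(i\). To show it I would note that, since all vertices deleted at earlier iterations were \(\matching\)-exposed, \(\matching\) is unchanged and the walk \(\eta(W)\) found at iteration \(i\) is in fact a feasible \(\matching\)-augmenting walk in the original \(\graph\). Applying \cref{thm: feasible augmenting walk not stable} to \(\graph \setminus \OPT\) forces \(\OPT\) to hit \(\eta(W)\); since internal vertices are \(\matching\)-covered and \(\OPT\) preserves \(\matching\), this hit must be at an endpoint, and by construction \(S_i\) contains exactly the \(\matching\)-exposed endpoints of \(\eta(W)\). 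Summing over the pairwise disjoint \(S_i\) gives \(\Abs{S} = \sum_i \Abs{S_i} \leq 2 \sum_i \Abs{\OPT \cap S_i} \leq 2 \Abs{\OPT}\).

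The main obstacle is verifying the invariant that \(\eta(W)\) remains a valid witness against \(\OPT\) in the original graph at every iteration, since \(\OPT\) is chosen against \(\graph\) rather than the shrunk intermediate graph. This reduces to the observation that the algorithm only deletes \(\matching\)-exposed vertices, so deletions neither alter \(\matching\) nor change the endpoints' \(\matching\)-status, allowing the walk to be lifted back to \(\graph\) with feasibility preserved. Polynomial running time follows from the same argument as in \cref{thm: M-vertex-stab poly-time}: each iteration either returns or adds at least one vertex to \(S\), bounding the loop by \(n\) iterations.
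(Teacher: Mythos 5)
Your proposal is correct and follows essentially the same route as the paper: handle the maximum-weight case via \cref{thm: M-vertex-stab poly-time}, and otherwise observe that in the proper-path branch any feasible solution must contain at least one of the two \(\matching\)-exposed endpoints while the modified algorithm takes both. Your per-iteration charging argument (\(\Abs{\OPT\cap S_i}\geq 1\) with the \(S_i\) pairwise disjoint) is just a more explicit phrasing of the paper's observation that every vertex of \(S\) is either necessary or one of a necessary pair, and your key invariant---that only \(\matching\)-exposed vertices are ever deleted, so witness walks lift back to the original graph---is exactly what the paper's appeal to the earlier proof relies on.
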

\begin{proof}
    If \(\matching\) is a maximum-weight \(\capacity\)-matching in \(\graph\), the algorithm is unchanged, and the result follows from \cref{thm: M-vertex-stab poly-time}. So suppose \(\matching\) is not maximum-weight. We follow the argument as in the proof of \cref{thm: M-vertex-stab poly-time} (\cref{thm: feasible augmenting walk not stable} still holds if \(\matching\) is not maximum-weight in \(\graph\)), until we reach the case where we have a proper \(\matching'\)-augmenting path \(W\). We know by \cref{lem: mapping} and \cref{thm: feasible augmenting walk not stable} that we need to remove a vertex in \(\eta(W)\). We have \(\eta(u_i)\neq\eta(v_j)\) and both are \(\matching\)-exposed. Even though it might only be necessary to remove one of them, \cref{alg: M-vertex-stab modification} removes both vertices on line 9.
    
    Using the same argumentation as in the proof of \cref{thm: M-vertex-stab poly-time} (\cref{thm: G not stable iff G' contains} still holds if \(\matching\) is not maximum-weight in \(\graph\)), we conclude that either removing all vertices in \(S\) is sufficient, or the instance cannot be stabilized, and the algorithm correctly determines this. But, in contrast to that proof, when we exit the while loop, each vertex in \(S\) is either a necessary vertex to be removed from \(\graph\), in order to stabilize the instance, or it was one of two vertices for which it was necessary to remove at least one. Therefore, for any \(\matching\)-vertex-stabilizer \(S^*\) we have \(\Abs{S^*}\geq\frac{1}{2}\Abs{S}\). It follows that \cref{alg: M-vertex-stab} with the described modification is a 2-approximation.
    
    The modifications are all operations that can be done in polynomial time. The result follows.
\end{proof}
\section{Vertex-Stabilizer}\label{sec: vertex-stab}

The goal of this section is to prove the following theorem.

\begin{theorem}\label{thm: complexity vertex stab G w c}
    The vertex-stabilizer problem on capacitated graphs is \NP-complete, even if all edges have unit-weight. Furthermore, no efficient \(n^{1-\varepsilon}\)-approximation exists for any \(\varepsilon>0\), unless \(\Po=\NP\).
\end{theorem}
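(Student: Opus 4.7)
The plan has two parts. For NP-membership, given a candidate $S \subseteq \vertexSet$, both $\nu^\capacity(\graph\setminus S)$ and $\nu_f^\capacity(\graph\setminus S)$ can be computed in polynomial time (the former via Edmonds' $b$-matching algorithm, the latter by solving the LP \eqref{eq:fracmatching}), so stability of $\graph\setminus S$ is certified by checking equality of these two values. This places the decision version of vertex-stabilizer in \NP.

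For NP-hardness and the $n^{1-\varepsilon}$-inapproximability, I plan a gap-preserving polynomial-time reduction from Maximum Independent Set (MIS), which by H{\aa}stad--Zuckerman is NP-hard to approximate within $m^{1-\varepsilon'}$ for every $\varepsilon' > 0$ on graphs $H$ with $m$ vertices. Given $H = (V_H, E_H)$, I would construct a unit-weight capacitated instance $\graphunitc$ with $n = |\vertexSet| = \mathrm{poly}(m)$, together with a correspondence between vertex-stabilizers of $\graph$ and independent sets of $H$ strong enough to ensure that any factor-$n^{1-\varepsilon}$ approximation to the stabilizer problem would yield a factor-$m^{1-\varepsilon'}$ approximation to MIS, contradicting the assumption $\Po \neq \NP$.

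The building block would be a small unstable capacitated gadget---the first candidate being a unit-capacity triangle (for which $\nu = 1 < 3/2 = \nu_f$, so it is unstable and requires removing one of its vertices to become stable). One copy of the gadget would be attached to each vertex of $H$, and gadgets would be linked along the edges of $E_H$ via auxiliary ``hub'' vertices whose capacities are chosen large enough that their removal cannot collapse the integrality gap of any attached gadget. The intended effect is that only the $V_H$-copies in $\vertexSet$ are useful to remove, so that the minimum stabilizer essentially corresponds to the complement of an independent set of $H$ inside $V_H$. Duplicating the gadget polynomially many times per vertex of $H$ would then amplify the ratio between stabilizer size and $\alpha(H)$, turning the MIS gap into the claimed $n^{1-\varepsilon}$ gap for the stabilizer problem.

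The main obstacle will be the gadget calibration. I expect to need to verify, via LP duality or a direct $b$-blossom argument on the capacitated matching polytope, that removing any non-$V_H$ (hub/auxiliary) vertex still leaves the $c$-matching LP integrality gap strictly positive somewhere in the instance, so that auxiliary vertices never participate in an optimal stabilizer; and, symmetrically, that one can rule out ``creative'' stabilizers that circumvent the intended cover-like interpretation while being much smaller than $m - \alpha(H)$ times the blow-up. Once both properties are in place, NP-completeness and the $n^{1-\varepsilon}$-inapproximability follow by pulling the MIS hardness through the reduction; the trivial $n$-approximation the paper alludes to is simply ``remove all of $\vertexSet$''.
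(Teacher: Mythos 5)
Your \NP-membership argument is fine and matches the paper's (compute \(\nu^\capacity(\graph\setminus S)\) and \(\nu_f^\capacity(\graph\setminus S)\), check equality). The hardness part, however, rests on a structurally wrong choice of source problem. You reduce from Maximum Independent Set and arrange for the minimum stabilizer to correspond to the \emph{complement} of a maximum independent set of \(H\), i.e., to a minimum vertex cover of \(H\) scaled by the blow-up factor. Inapproximability does not survive this complementation: the H{\aa}stad--Zuckerman gap makes it hard to distinguish \(\alpha(H)\geq m^{1-\varepsilon'}\) from \(\alpha(H)\leq m^{\varepsilon'}\), but the corresponding values of \(m-\alpha(H)\) are \(m-m^{1-\varepsilon'}\) versus \(m-m^{\varepsilon'}\), whose ratio tends to \(1\). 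Duplicating gadgets \(t\) times per vertex multiplies both quantities by \(t\) and leaves the ratio unchanged, so no polynomial blow-up can amplify it. Indeed, minimum vertex cover admits a \(2\)-approximation, so any reduction in which the optimal stabilizer size equals \(t\cdot(m-\alpha(H))\) can prove at best constant-factor hardness, never \(n^{1-\varepsilon}\). Your plan could still yield plain NP-completeness (the decision versions do correspond), but the inapproximability claim cannot be reached along this route.

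The paper sidesteps this by reducing from a problem that is already a \emph{minimization} problem with an \(n^{1-\varepsilon}\) inapproximability gap, namely Minimum Independent Dominating Set (Halld\'orsson). Its construction enforces the two constraints separately: a gadget \(\Gamma_v\) attached to each closed neighbourhood forces any stabilizer to hit \(N^+(v)\) (domination), and \(n\) parallel gadgets \(\Gamma_{uv}^i\) per edge make it prohibitively expensive for a stabilizer to contain both endpoints of an edge (independence), giving an exact size-preserving correspondence between vertex-stabilizers of the constructed graph and independent dominating sets of the source graph. If you want to salvage your approach, keep the triangle-based unstable gadget (it is essentially what the paper uses inside \(\Gamma_v\) and \(\Gamma_{uv}^i\)), but switch the source problem to one whose minimization objective itself carries a polynomial gap; otherwise the calibration issues you anticipate are moot, because the gap is lost before the gadget analysis even begins.
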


Note that, given an unstable graph \(\graphwc\), removing all vertices (but two) trivially yields a stable graph. This gives a (trivial) \(n\)-approximation algorithm for the vertex-stabilizer problem. The theorem above essentially implies that one cannot hope for a much better approximation. To prove it, we will use:

\begin{namedthm}[Minimum Independent Dominating Set (MIDS) problem]
    Given a graph \(\graphVE\), compute a minimum-cardinality subset \(S \subseteq V\) that is independent (for all \(\shortEdge{u}{v}\in\edgeSet\) at most one of \(u\) and \(v\) is in \(S\)) and dominating (for all \(v\in\vertexSet\) at least one \(u\in N^+(v)\) is in \(S\)).
\end{namedthm}

There is no efficient \(n^{1-\varepsilon}\)-approximation for any \(\varepsilon>0\) for the MIDS problem, unless \(\Po=\NP\) \cite[corollary 3]{Halldorsson1993Approximating}.

\begin{proof}[Proof of \cref{thm: complexity vertex stab G w c}]
    The decision variant of the problem asks to find a vertex-stabilizer of size at most \(k\). This problem is in \NP, since if a vertex set \(S\) is given, it can be verified in polynomial time if \(\Abs{S}\leq k\) and if \(\nu^\capacity(\graph\setminus S)=\nu_f^\capacity(\graph\setminus S)\). We prove the \NP-hardness and approximation factor by given an approximation-preserving reduction from the MIDS problem.

    Let \(\graphVE\) be an instance of the MIDS problem. For \(v\in\vertexSet\), we define the gadget \(\Gamma_v\) by
    \begin{gather}
        \vertexSet(\Gamma_v) = N^+(v) \cup \CBra{v_1,v_2,v_3,v_4}, \\
        \edgeSet(\Gamma_v) = \CBra{\shortEdge{u}{v_1} : u\in N^+(v)} \cup \CBra{\shortEdge{v_1}{v_2},\shortEdge{v_2}{v_3},\shortEdge{v_3}{v_4},\shortEdge{v_2}{v_4}}.
    \end{gather}
    For \(e=\shortEdge{u}{v}\in\edgeSet\) and \(i\in\CBra{1,\ldots,n}\), we define the gadget \(\Gamma_{uv}^i\) by
    \begin{gather}
        \vertexSet(\Gamma_{uv}^i) = \CBra{u,v,e_1^i,e_2^i,e_3^i,e_4^i,e_5^i},
        \\
        \edgeSet(\Gamma_{uv}^i) = \CBra{\shortEdge{u}{e_1^i},\shortEdge{v}{e_1^i},\shortEdge{e_1^i}{e_2^i},\shortEdge{e_1^i}{e_3^i},\shortEdge{e_3^i}{e_4^i},\shortEdge{e_4^i}{e_5^i},\shortEdge{e_3^i}{e_5^i}}.
    \end{gather}
    See \cref{fig: complexity proof gadgets} for an example of these gadgets.
    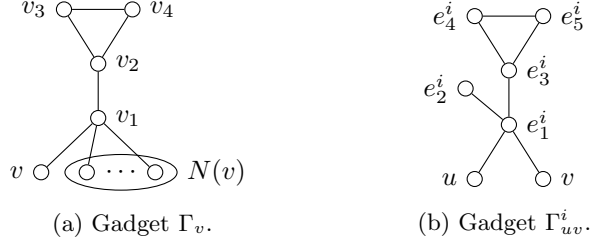
\begin{figure}[t]
        \centering
        \begin{subfigure}{.3\textwidth}
            \centering
            \begin{tikzpicture}[
    scale=.6,
    circ/.style={
        circle,
        draw=black,
        inner sep=2pt
    }
]

\node[circ,label={left :\(v_3\)}] (v3) at (0,0) {};
\node[circ,label={right:\(v_4\)}] (v4) at (1.5,0) {};
\node[circ,label={right:\(v_2\)}] (v2) at (.75,-1.2) {};
\node[circ,label={right:\(v_1\)}] (v1) at (.75,-2.4) {};
\node[circ,label={left:\(v\)}] (v) at (-.5,-3.6) {};
\node[circ] (Nv1) at (.5,-3.6) {};
\node (Nvdots) at (1.25,-3.6) {\(\cdots\)};
\node[circ] (Nv2) at (2,-3.6) {};
\node[ellipse,draw=black,inner sep=-1pt,fit=(Nv1) (Nvdots) (Nv2),label={right:\(N(v)\)}] (Nv) {};

\draw (v3) to (v4);
\draw (v4) to (v2);
\draw (v2) to (v3);
\draw (v2) to (v1);
\draw (v1) to (v);
\draw (v1) to (Nv1);
\draw (v1) to (Nv2);

\end{tikzpicture}
            \caption{Gadget \(\Gamma_v\).}
            \label{fig: complexity proof gadget gamma_v}
        \end{subfigure}
        \begin{subfigure}{.3\textwidth}
            \centering
            \begin{tikzpicture}[
    scale=.6,
    circ/.style={
        circle,
        draw=black,
        inner sep=2pt
    }
]

\node[circ,label={left :\(e_4^i\)}] (s4) at (0,0) {};
\node[circ,label={right:\(e_5^i\)}] (s5) at (1.5,0) {};
\node[circ,label={right:\(e_3^i\)}] (s3) at (.75,-1.2) {};
\node[circ,label={left:\(e_2^i\)}] (s2) at (-.2,-1.6) {};
\node[circ,label={right:\(e_1^i\)}] (s1) at (.75,-2.4) {};
\node[circ,label={left:\(u\)}] (u) at (0,-3.6) {};
\node[circ,label={right:\(v\)}] (v) at (1.5,-3.6) {};

\draw (s4) to (s5);
\draw (s5) to (s3);
\draw (s3) to (s4);
\draw (s3) to (s1);
\draw (s2) to (s1);
\draw (s1) to (u);
\draw (s1) to (v);

\end{tikzpicture}
            \caption{Gadget \(\Gamma_{uv}^i\).}
            \label{fig: complexity proof gadget gamma_uv}
        \end{subfigure}
        \caption{Examples of gadgets.}
        \label{fig: complexity proof gadgets}
    \end{figure} 
    Now let \(\graph'\) be defined as the union of all \(\Gamma_v\) and all \(\Gamma_{uv}^i\), such that vertices from \(\vertexSet\) overlap. We set the capacity as follows: \(\capacity_v=d_v^{\edgeSet(\graph')}\) for all \(v\in\vertexSet\), \(\capacity_{v_1}=d_v^{\edgeSet}+1\) for all \(v\in\vertexSet\), \(\capacity_{e_1^i}=\capacity_{e_3^i}=2\) for \(e_1^i, e_3^i\in\vertexSet(\Gamma_{uv}^i)\) for all \(e=\shortEdge{u}{v}\in\edgeSet\) and \(i\in\CBra{1,\ldots,n}\), and \(\capacity_v=1\) for all remaining \(v\in\vertexSet(\graph')\). All edges are set to have unit-weight. The key point is:

    \begin{claim}
        \(\graph\) has an independent dominating set of size at most \(k\) if and only if \((\graph',\1,\capacity)\) has a vertex-stabilizer of size at most \(k\).
    \end{claim}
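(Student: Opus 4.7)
The plan is to prove the two implications separately, exploiting the specific structure of the two gadget families.

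For $(\Rightarrow)$, given an IDS $S$ of $G$ with $|S| \le k$, I would exhibit a maximum $c$-matching $M^*$ of $G' \setminus S$ such that no feasible $M^*$-augmenting walk exists, concluding stability via \cref{thm: feasible augmenting walk not stable}. The matching is built gadget-wise so that every internal vertex lands at capacity: in each $\Gamma_v$ with $v \notin S$ include $v_1 v_2$, $v_3 v_4$, and $v_1 w$ for every $w \in N^+(v) \setminus S$ (domination makes $v_1$ match exactly $c_{v_1}$ times); in each $\Gamma_u$ with $u \in S$ include $u_1 u_2$, $u_3 u_4$, and $u_1 w$ for every $w \in N(u)$ (independence gives $N(u) \cap S = \emptyset$); in each $\Gamma_{uv}^i$ include $e_3^i e_4^i$, $e_3^i e_5^i$, and the two available $e_1^i$-edges among $\{e_1^i u, e_1^i v, e_1^i e_2^i\}$ (independence ensures at least one of $u, v$ remains as an anchor).

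The only odd cycles of $G'$ lie inside the gadget triangles $v_2 v_3 v_4$ and $e_3^i e_4^i e_5^i$, since deleting them leaves a bipartite graph with $V$ on one side and the central vertices $\{v_1\} \cup \{e_1^i\}$ on the other, plus pendants; hence by \cref{thm: G not stable iff G' contains} and \cref{lem: decomposition} it suffices to rule out augmenting flowers and bi-cycles on these triangles under $M^*$. For the $\Gamma_v$-triangle, $M^*$ contains only $v_3 v_4$, making it a blossom at the $M^*$-saturated base $v_2$ whose bare form is infeasible; any flower extension must enter $v_1$, but domination forces every edge at $v_1$ to lie in $M^*$, leaving only the length-one flower $P = (v_2; v_1 v_2; v_1)$, and a direct comparison $w(C \setminus M^*) + 2w(P\setminus M^*) = 2 \not> 3 = w(C \cap M^*) + 2w(P \cap M^*)$ shows it is not augmenting. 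For the $\Gamma_{uv}^i$-triangle, $M^*$ contains both $e_3^i e_4^i$ and $e_3^i e_5^i$, making it a blossom at base $e_3^i$ with bare weight gain $-1$; flower extensions through the non-$M^*$ edge $e_1^i e_3^i$ and onward via an $M^*$-edge at $e_1^i$ are case-analyzed as non-augmenting. Bi-cycles are precluded because any connecting alternating path would have to leave a $V$-vertex or a central vertex via a non-$M^*$ edge, of which there are none.

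For $(\Leftarrow)$, given a vertex-stabilizer $T$ of $G'$ with $|T| \le k$, I construct an IDS $S$ of $G$ with $|S| \le |T|$ by a two-step exchange argument. The key structural observation is that for each $v \in V$, $T$ must contain at least one vertex in $N^+(v) \cup \{v_1, v_2, v_3, v_4\}$: otherwise $\Gamma_v$ survives intact in $G' \setminus T$ and the over-capacity relation $d_{v_1}^{E(G')} = d_v^E + 2 > d_v^E + 1 = c_{v_1}$ forces every maximum $c$-matching to leave an edge at $v_1$ unmatched, supplying the non-$M$ edge needed to extend the triangle blossom into a feasible augmenting flower (the positive version of the calculation from the forward direction) — contradicting stability. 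Since the interior sets $\{v_1, v_2, v_3, v_4\}$ are pairwise disjoint across $v$, one can assign a unique witness $\phi(v) \in T$ per $v$, and replacing each interior witness by an arbitrary vertex of $N^+(v) \cap V$ produces $T' \subseteq V$ that dominates $G$ and satisfies $|T'| \le |T|$.

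The main obstacle is enforcing independence of $T'$ without growing its size. If $T'$ contains both endpoints of some edge $uv$, one of them must be dropped while re-routing the witnesses of any vertices previously dominated only by the dropped endpoint through the remaining one. This is where the $n$-fold redundancy of the edge gadgets $\Gamma_{uv}^i$ plays its role: each $\Gamma_{uv}^i$ is stable irrespective of which of $\{u, v\}$ are removed (verified by the same triangle-blossom analysis), so dropping an endpoint preserves stability of $G' \setminus T'$, and a careful amortization across the $n$ copies shows that the dominating re-routing can always be carried out. Iterating yields an IDS of size at most $|T|$.
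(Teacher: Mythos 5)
The serious gap is in your $(\Leftarrow)$ direction, in the step that enforces independence. You assert that ``each $\Gamma_{uv}^i$ is stable irrespective of which of $\{u,v\}$ are removed,'' but this is false in exactly the case that matters: if \emph{both} $u$ and $v$ are deleted, each component on $\{e_1^i,\ldots,e_5^i\}$ has $\nu^\capacity=3<3.5=\nu_f^\capacity$ (take $x_{e_1^ie_2^i}=x_{e_1^ie_3^i}=1$ and $\tfrac12$ on the triangle) and is therefore unstable. That instability is the entire purpose of the $n$ copies of $\Gamma_{uv}^i$: it forces the \emph{original} stabilizer $T$ to contain at most one endpoint of each $G$-edge, since otherwise $T$ must additionally hit all $n$ unstable components, so $|T|\geq n+2>n$ and a trivial IDS of size $n\leq k$ exists. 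You never establish this property of $T$, and without it your ``drop one endpoint and re-route witnesses via amortization'' plan has no size guarantee and no concrete mechanism. It is also aimed at the wrong target: $T'$ only needs to be an independent dominating set of $G$, so preserving stability of $\graph'\setminus T'$ is irrelevant. The paper instead first proves the at-most-one-endpoint property of $T$, and then builds the dominating set greedily, replacing interior witnesses by vertices of $N^+(v)$ \emph{already chosen} whenever possible, which makes independence follow directly from that property.

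Your $(\Rightarrow)$ direction takes a genuinely different route from the paper (exhibit a matching and rule out augmenting structures combinatorially, versus the paper's shorter complementary-slackness certificate pairing the integral matching with an explicit fractional vertex cover), and the skeleton is workable, but it too has loose ends. The reduction to ``flowers and bi-cycles on the two triangle types'' is carried out in the capacitated graph, while \cref{thm: G not stable iff G' contains} and \cref{lem: decomposition} live in the unit-capacity auxiliary graph, where the triangle $e_3^ie_4^ie_5^i$ does not even survive as a triangle (since $\capacity_{e_3^i}=2$, its copies turn it into a longer odd cycle through $e_1^i$); you should either argue directly that no feasible $\matching^*$-augmenting walk exists or do the analysis in the auxiliary graph. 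Also, when neither endpoint of $e=\shortEdge{u}{v}$ lies in $S$ you must match $e_1^i$ to both $u$ and $v$ (not to $e_2^i$ and one of them), or a non-matching edge appears at a $V$-vertex and your ``no non-$\matching^*$ edge to escape through'' argument for bi-cycles and long flowers breaks.
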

    \begin{proof}
        (\(\Rightarrow\))
        Let \(S\) be an independent dominating set of \(\graph\) of size \(k\). The vertices in \(S\) naturally correspond with vertices in \(\graph'\). We show that \(S\) is a vertex-stabilizer of \((\graph',\1,\capacity)\).
        
        We define a \(\capacity\)-matching \(\matching\) and fractional vertex cover \((y,z)\) on \(\graph'\setminus S\) as follows. First, set \(y_v=0\) for all \(v\in\vertexSet\setminus S\). 

        Next, for all \(v\in\vertexSet\), consider \(\Gamma_v\). Add \(\CBra{\shortEdge{u}{v_1} : u\in N^+(v)\setminus S} \cup \CBra{\shortEdge{v_1}{v_2},\shortEdge{v_3}{v_4}}\) to \(\matching\).  Note that at least one vertex from \(N^+(v)\) is in \(S\), since \(S\) is dominating. Set \(y_{v_1}=0\), \(y_{v_2}=1\), \(y_{v_3}=y_{v_4}=0.5\), \(z_e=1\) for all \(e\in\CBra{\shortEdge{u}{v_1} : u\in N^+(v)\setminus S}\) and \(z_e=0\) for the remaining edges in the gadget.
    
        Finally, for all \(e=\shortEdge{u}{v}\in\edgeSet\) and \(i\in\CBra{1,\ldots,n}\), consider \(\Gamma_{uv}^i\). Since \(S\) is dominating, at most one of \(u\) and \(v\) is in \(S\). If neither are in \(S\), add both \(\shortEdge{u}{e_1^i}\) and \(\shortEdge{v}{e_1^i}\) to \(\matching\). If one of them is in \(S\), without loss of generality let it be \(u\), then add \(\shortEdge{v}{e_1^i}\) and \(\shortEdge{e_1^i}{e_2^i}\) to \(\matching\). Furthermore, add \(\shortEdge{e_3^i}{e_4^i}\) and \(\shortEdge{e_3^i}{e_5^i}\) to \(\matching\). Set \(y_{e_1^i}=1\), \(y_{e_2^i}=0\), \(y_{e_3^i}=y_{e_4^i}=y_{e_5^i}=0.5\), and \(z_f=0\) for all edges \(f\) in the gadget.
    
        Let \(x\) be the indicator vector of \(\matching\). One can verify that \(x\) and \((y,z)\) satisfy the complementary slackness conditions for \(\nu^\capacity_f(\graph'\setminus S)\) and \(\tau^\capacity_f(\graph'\setminus S)\). Since \(x\) is integral, this implies that \(\graph'\setminus S\) is stable.

        (\(\Leftarrow\))
        Let \(S\) be a vertex-stabilizer of \((\graph',\1,\capacity)\) of size \(k\). We show that: 
        (i) \(S\) contains at least one vertex of each gadget \(\Gamma_v\); 
        (ii) without loss of generality, one can assume that at most one of \(u\) and \(v\) is in \(S\) for each edge \(\shortEdge{u}{v}\in\edgeSet\).
        
        (i)
        Suppose for the sake of contradiction that there is some \(v\in\vertexSet\) such that \(S\) contains no vertices of \(\Gamma_v\). Since \(\graph'\setminus S\) is stable, there is a maximum-cardinality fractional \(\capacity\)-matching \(x^*\), that is integral. Define for each \(e\in\edgeSet(\graph'\setminus S)\)
        \begin{equation}
            x_e = \begin{cases}
                x^*_e & \text{ if } e\in\edgeSet(\graph'\setminus S)\setminus\edgeSet[\Gamma_v], \\
                1 & \text{ if } e\in\CBra{\shortEdge{u}{v_1} : u\in N^+(v)}, \\
                0 & \text{ if } e=\shortEdge{v_1}{v_2}, \\
                0.5 & \text{ if } e\in\CBra{\shortEdge{v_2}{v_3},\shortEdge{v_3}{v_4},\shortEdge{v_2}{v_4}}.
            \end{cases}
        \end{equation}
        Note that \(x\) is a fractional \(\capacity\)-matching in \(\graph'\setminus S\), since \(x^*\) is. However, \(\sum_{e\in\edgeSet[\Gamma_v]} x_e = d_v+2.5 > \sum_{e\in\edgeSet[\Gamma_u]} x^*_e\), since \(x^*\) is integral. Hence, \(\1^\top x > \1^\top x^*\), contradicting the optimality of \(x^*\).
        
        (ii)
        Suppose there is some \(e=\shortEdge{u}{v}\in\edgeSet\) such that \(S\) contains both \(u\) and \(v\). All gadgets \(\Gamma_{uv}^i\) are then components in \(\graph'\setminus S\). If \(u\) and \(v\) are the only vertices in \(S\) from some component \(\Gamma_{uv}^i\), then a maximum-cardinality fractional \(\capacity\)-matching in this components is given by \(x_{e_1^i e_2^i}=x_{e_1^i e_3^i}=1\) and \(x_{e_3^i e_4^i}=x_{e_4^i e_5^i}=x_{e_3^i e_5^i}=0.5\). Which means this component is not stable, and thus \(\graph'\setminus S\) is not stable, a contradiction. Hence, \(S\) must contain at least one vertex of each \(\Gamma_{uv}^i\) that is neither \(u\) nor \(v\). Consequently, \(k=\Abs{S}\geq n+2\). Since \(\graph\) has only \(n\) vertices, it obviously has an independent dominating set of size at most \(n\), and hence of size at most \(k\). Such a set can for example be obtained by a greedy approach. Hence, for the remainder of the proof we can assume that at most one of \(u\) and \(v\) is in \(S\) for each \(\shortEdge{u}{v}\in\edgeSet\).
        
        We now create a set \(S'\subseteq\vertexSet\) from \(S\), that is an independent dominating set of \(\graph\) of size at most \(k\), as follows.
        Iterate over \(v\in\vertexSet\). Let \(S_v=S\cap\vertexSet(\Gamma_v)\). Note that \(S_v\neq\emptyset\) by (i). Define
        \begin{equation}
            S'_v = \begin{cases}
                \Par{S_v \cup S'} \cap N^+(v) & \text{if this is nonempty}, \\
                v & \text{otherwise}.
            \end{cases}
        \end{equation}
        Set \(S'=S'\cup S'_v\), and repeat for the next vertex.

        Clearly, all \(S'_v\)'s are nonempty, which means that \(S'\) contains at least one vertex from \(N^+(v)\) for all \(v\in\vertexSet\), which means \(S'\) is dominating.

        Suppose for the sake of contradiction that \(S'\) contains both \(u\) and \(v\) for some edge \(\shortEdge{u}{v}\in\edgeSet\). We know \(S\) did not contain both of them, by (ii). If \(S\) contained exactly one of them, without loss of generality let it be \(u\). Then \(\Par{S_v \cup S'} \cap N^+(v)\) also contains \(u\). In particular, this means that we did not add \(v\) to \(S'_v\) and consequently also not to \(S'\), a contradiction. If \(S\) contained neither of them, then because we do the process iteratively, one of them will be added first to \(S'\). Without loss of generality let it be \(u\). Then again \(\Par{S_v \cup S'} \cap N^+(v)\) contains \(u\), so we reach a contradiction in the same way. In conclusion, \(S'\) is independent.

        Before we added \(S'_v\) to \(S'\), we had \(\Abs{S'_v\setminus S'}\leq \Abs{S_v}\). Consequently, \(\Abs{S'}\leq\cup_{v\in\vertexSet} \Abs{S_v} \leq \Abs{S} = k\).
    \end{proof}
    By this claim, any minimum-cardinality vertex-stabilizer of \((\graph',\1,\capacity)\) is of the same size as any minimum independent dominating set of \(\graph\). 
    Further, any efficient \(\alpha\)-approximation algorithm for the vertex-stabilizer problem translates into an efficient \(\alpha\)-approximation algorithm for the MIDS problem. Hence, the result follows from the inapproximability of the MIDS problem.
\end{proof}

\section{Cooperative Matching Games}\label{sec: CMG}

Cooperative matching games in unit-capacity graphs, defined in the introduction, extend quite easily to capacitated graphs, by replacing each \(\nu\) with \(\nu^\capacity\).
In unit-capacity graphs \(\graph\) the following statements are equivalent~\cite{Deng1999Algorithmic,Kleinberg2008Balanced}:
\begin{enumerate}[(i)]
    \item \(\graph\) is stable, 
    \item there exists an allocation in the core of the CMG on \(\graph\),
    \item there exists a stable outcome for the NBG on \(\graph\).
\end{enumerate}
We here note that the equivalence does not extend to capacitated graphs. 

In particular, as mentioned in the introduction, we still have \((i)\iff(iii)\) proven in \cite[corollary 3.3]{Bateni2010Cooperative}. The implication \((i)\implies(ii)\) still holds, and follows from \cite[lemma 3.4]{Bateni2010Cooperative}\footnote{\cite{Bateni2010Cooperative} assumes that the graph is bipartite, but bipartiteness is not needed in their proof.}. However, the graph \(\graph\) given in \cref{fig: counter example} shows that  \((ii)\centernot\implies(i)\) (and hence \((ii)\centernot\implies(iii)\)). 
    
Assuming all the edges of \(\graph\) in \cref{fig: counter example} have unit weight, it is quite easy to see that \(\nucG=3\) and \(\nufcG=3.5\), thus \(\graph\) is not stable. One can check that \(y=(1,1,1,0)\) is in the core.
\begin{figure}[t]
    \centering
    \begin{tikzpicture}[
    scale=.6,
    circ/.style={
        circle,
        draw=black,
        inner sep=2pt
    },
    notMatched/.style={},
    matched/.style={ultra thick},
    fracMatched/.style={dashed,ultra thick}
]

\node[circ,label={left :\(2\)}] (1) at (0, 0) {};
\node[circ,label={right:\(2\)}] (2) at (2, 0) {};
\node[circ,label={left :\(2\)}] (3) at (0,-2) {};
\node[circ,label={right:\(1\)}] (4) at (2,-2) {};

\draw[matched] (1) to (2);
\draw[matched] (1) to (3);
\draw[matched] (2) to (3);
\draw[notMatched] (2) to (4);
\draw[notMatched] (3) to (4);
\end{tikzpicture}
\hspace{1em}
\begin{tikzpicture}[
    scale=.6,
    circ/.style={
        circle,
        draw=black,
        inner sep=2pt
    },
    notMatched/.style={},
    matched/.style={ultra thick},
    fracMatched/.style={dashed,very thick}
]

\node[circ,label={left :\(1\)}] (1) at (0, 0) {};
\node[circ,label={right:\(1\)}] (2) at (2, 0) {};
\node[circ,label={left :\(1\)}] (3) at (0,-2) {};
\node[circ,label={right:\(0\)}] (4) at (2,-2) {};

\draw[matched] (1) to (2);
\draw[matched] (1) to (3);
\draw[fracMatched] (2) to (3);
\draw[fracMatched] (2) to (4);
\draw[fracMatched] (3) to (4);
\end{tikzpicture}
    \caption{On the left: the graph \(\graph\) where the values close to the vertices indicate the capacities. Bold edges indicate a maximum \(\capacity\)-matching. On the right: the graph \(\graph\) where the values close to the vertices indicate the allocation \(y\). A maximum fractional \(\capacity\)-matching is given by \(x_e=\frac{1}{2}\) for dashed edges, \(x_e=1\) otherwise.}
    \label{fig: counter example}
\end{figure}
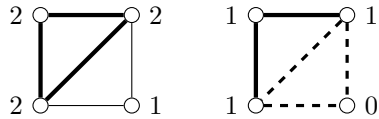

\bibliographystyle{plain}
\bibliography{bib}

\end{document}